\documentclass[a4paper,USenglish,cleveref,autoref,thm-restate]{lipics-v2021}

\usepackage{array}
\usepackage{booktabs}
\usepackage{enumitem}      %
\usepackage{graphicx}
\usepackage{xspace}
\usepackage{tikz}
\usepackage[most]{tcolorbox}
\usepackage{adjustbox}
\usetikzlibrary{arrows.meta,calc,positioning,fit,backgrounds,decorations.pathreplacing,shadows}
\ifdefined\tikzexternalrealjob\def\externalizefigures{}\fi
\ifdefined\externalizefigures
  \usetikzlibrary{external}
\fi
\providecommand{\tikzsetnextfilename}[1]{}
\extrafloats{64}

\newif\ifexamples
\examplesfalse
\ifdefined\noexamples\examplesfalse\fi
\usepackage{comment}
\ifexamples
\newtcolorbox{runex}{enhanced, breakable, title={Running Example},
  fonttitle=\sffamily\bfseries\small, coltitle=black,
  colback=lipicsLightGray!20, colbacktitle=figYellow!55, colframe=figline,
  boxrule=0.4pt, arc=\arrrad, left=4pt, right=4pt, top=2pt, bottom=2pt,
  before skip=4pt, after skip=4pt,
  before upper={\setlength{\parindent}{0pt}\setlength{\parskip}{0.5\baselineskip}}}
\else
\excludecomment{runex}
\fi

\DeclareRobustCommand{\runningexample}{\ifexamples the running example\else
  $T = \texttt{GTGCTGAGAGATGCTT}$ with $w = 2$ and
  $\mathcal{E} = \{\texttt{CT}, \texttt{GA}\}$\fi}

\newcommand{\code}[1]{\texttt{#1}}
\newcommand{\Docc}{\mathcal{C}}      %
\newcommand{\Dset}{\mathcal{D}}      %
\newcommand{\Parse}{\mathcal{P}}     %
\newcommand{\Suf}{\mathcal{S}}       %
\newcommand{\Dict}{\mathcal{T}_{\Dset}}  %
\newcommand{\lex}{\prec}
\newcommand{\rev}[1]{\overleftarrow{#1}}
\newcommand{\suf}{\mathrm{suf}}
\newcommand{\pre}{\mathrm{pre}}
\newcommand{\rank}{\mathrm{rank}}
\newcommand{\SA}{\mathrm{SA}}
\newcommand{\BWT}{\mathrm{BWT}}
\newcommand{\bmax}{b_{\max}}
\newcommand{\lb}{\mathit{lb}}        %
\newcommand{\rb}{\mathit{rb}}        %
\newcommand{\libcubwt}{\code{libcubwt}\xspace}
\newcommand{\libsais}{\code{libsais}\xspace}
\newcommand{\gpupfpbwt}{\code{gpupfpbwt}\xspace}

\title{Engineering Lightweight Burrows-Wheeler Transform Construction on the GPU via Prefix-Free Parsing}
\titlerunning{Engineering Lightweight BWT Construction on the GPU via Prefix-Free Parsing}

\author{Florian Kurpicz}{University of Münster, Germany}{kurpicz@uni-muenster.de}{https://orcid.org/0000-0002-2379-9455}{}
\author{Maximilian Paris}{University of Münster, Germany}{mparis1@uni-muenster.de}{}{}
\authorrunning{F.\ Kurpicz and M.\ Paris}

\Copyright{Florian Kurpicz and Maximilian Paris}
\ccsdesc[500]{Theory of computation~Data compression}
\ccsdesc[300]{Theory of computation~Sorting and searching}
\keywords{Burrows--Wheeler transform, prefix-free parsing, GPU, suffix sorting, memory-efficient algorithms}
\category{}
\relatedversion{Based on M.~Paris' Master's thesis~\cite{MaximilianParis2026}.}
\supplement{\url{https://codeberg.org/kurpicz/gpupfpbwt} (source code)}
\funding{}
\acknowledgements{}

\nolinenumbers
\hideLIPIcs

\EventEditors{TODO}
\EventNoEds{1}
\EventLongTitle{TODO}
\EventShortTitle{SEA 2027}
\EventAcronym{SEA}
\EventYear{2027}
\EventDate{TODO}
\EventLocation{TODO}
\EventLogo{}
\SeriesVolume{TODO}
\ArticleNo{TODO}

\begin{document}
\maketitle

\begin{abstract}
The Burrows--Wheeler transform (BWT) is the core of many compressed full-text indexes and most read aligners.
The fastest BWT constructions on GPUs need about $20$ bytes of device memory per character, which severely limits the texts that can be processed there.
For example, on our GPU with $24$\,GB device memory, this is only about $1\,200$\,MiB.
Prefix-free parsing (PFP) can be used to reduce the memory requirements of BWT construction algorithms in main memory.
But, it uses hash tables and inverted lists which are not suited for GPUs.
We show that PFP-based BWT construction can work with only sorting, scanning, or simple data-parallel procedures.
This results in our GPU construction \gpupfpbwt that on an NVIDIA RTX 3090, reduces the peak memory on an English text to $11.9$ bytes per character.
There, it can processes $2\,108$\,MiB input files, which is $1.8$ times larger than what the current state of the art \libcubwt can handle.
It is faster than or on par with \libcubwt on eight of nine repetitive texts but $2.0$ to $3.2$ times slower on five of six non-repetitive ones, i.e., it provides a trade-off between required device memory and throughput.
\end{abstract}

\section{Introduction}\label{sec:intro}

The Burrows--Wheeler transform (BWT)~\cite{BurrowsW94} of a text is the sequence of the characters that precede the text's suffixes in lexicographic order.
It is reversible and used in compression~\cite{BurrowsW94,Manzini01}.
A BWT consists of $r$ maximal runs of equal characters, which can be stored in $O(r)$ words.
Here, each run is represented by its character and length.
This makes $r$ a common measure of the compressibility of a text.
Since the BWT supports pattern matching in compressed form, it is the core of compressed full-text indexes such as the FM-index~\cite{FerraginaM00}, the r-index~\cite{GagieNP20}, and the move data structure~\cite{NishimotoT21}.
These indexes are used read aligners such as Bowtie~\cite{LangmeadTPS09}, BWA~\cite{LiD09}, and SOAP2~\cite{LiYLLYKW09} (FM-index) and pangenome indexes such as MONI~\cite{RossiOLGB22} (r-index), Movi~\cite{ZakeriBAGL24}, and Movi~2~\cite{ZakeriBGL26} (move data structure).

The BWT can be construed by sorting all suffixes of the text, a data-parallel task that suits GPUs.
The fastest GPU BWT construction, \libcubwt~\cite{libcubwt}, needs about $20$ bytes of device memory per character.
On a consumer GPU with $24$\,GB of memory, this limits \libcubwt to texts of about $1\,200$\,MiB, far below the size of the string collections and pangenomes that BWT-based indexes are built for.
In main memory, PFP-BWT, the construction based on prefix-free parsing (PFP)~\cite{BoucherGKLMM19}, reduces the required working memory.

\subparagraph*{Our Contributions.}
We port prefix-free parsing to the GPU.
Our key observation is that some structures, e.g., the hash tables and inverted lists of the original unnecessary.
If all phrase \emph{occurrences} are kept instead of the distinct phrases, we can obtain all necessary information just by sorting.
Every step of our algorithm thus becomes a sort, a scan, or a simple data-parallel kernel.
After the preliminaries (\cref{sec:prelim}), related work (\cref{sec:related}), and the original PFP-BWT algorithm (\cref{sec:pfp}), we present
\begin{itemize}[nosep]
\item a formulation of PFP-BWT that consists of sorts and scans only (\cref{sec:algo}),
\item memory levels that need less memory than \libcubwt (\cref{sec:memory,sec:ref-level56}),
\item two sorting cores and a space-saving configuration (\cref{sec:ref-msd,sec:ref-space,sec:ref-doubling}), and
\item an extensive experimental evaluation (\cref{sec:eval}).
\end{itemize}

\section{Preliminaries}\label{sec:prelim}

Let $T[0..n-1]$ be a text of length $n$ over an alphabet $\Sigma$ of size $\sigma$.
$T[i..j]$ denotes the substring from position $i$ to position $j$, $T[i..]$ the suffix starting at position~$i$, and $\rev{x}$ the reverse of a string $x$.
We write $x \lex y$ if $x$ is lexicographically smaller than $y$ and consider a proper prefix as smaller.
The suffix array $\SA$ lists the starting positions of the lexicographically sorted suffixes.
Then, $\BWT[i] = T[(\SA[i] - 1) \bmod n]$, i.e., the $i$th character is the one that cyclically precedes the $i$th smallest suffix.
We use the notation of \libcubwt~\cite{libcubwt} and return the \emph{primary index} $q$ with $\SA[q] = 0$ instead of appending a sentinel to the text.

For a base $z$ and a modulus $p$, the \emph{Karp--Rabin fingerprint}~\cite{KarpR87} of a string $x$ of length $\ell$ is $\phi(x) = \bigl(\sum_{i=0}^{\ell-1} z^{\ell-1-i} \cdot x[i]\bigr) \bmod p$.
The fingerprints of all windows of a fixed length can be computed in one left-to-right scan with a rolling hash.

\subparagraph*{GPU Programming.}
A GPU executes a \emph{kernel} with many thousands of threads.
Up to $1\,024$ threads form a \emph{thread block}, which runs on one streaming multiprocessor and shares its small on-chip \emph{shared memory}.
All threads access the \emph{device memory}, whose bandwidth ($936$\,GB/s on our GPU) is reached only if neighboring threads access consecutive memory.
The device memory ($24$\,GB on our GPU) is small compared to the host memory. %
We the \emph{peak resident device memory}, i.e., the maximum number of bytes allocated on the GPU at the same time.
Our algorithms combine kernels with the data-parallel primitives of CUB~\cite{CUB} and Thrust~\cite{BellH12}, e.g., prefix sums (\emph{scans}), discarding, reductions, and radix sorts.%

\section{Related Work}\label{sec:related}

Since the BWT follows from the suffix array in one scan, every suffix array construction algorithm (SACA) also constructs the BWT.
We discuss BWT construction and suffix sorting on the GPU. \Cref{sec:more_related_work} covers other models of computation, and a survey~\cite{BingmannDFKOS22} covers further building blocks of full-text indexes that utilize the BWT.

\subparagraph*{BWT construction on the GPU.}
The first GPU BWT constructions transform the blocks of bzip2-like compressors~\cite{PatelZMDO12,DeshpandeN15}.
Zhao et al.~\cite{ZhaoYX12} build the BWT of a human genome by blockwise suffix sorting~\cite{Karkkainen07}, with the GPU sorting the suffixes of each block.
Liu et al.~\cite{LiuLL14} builds the BWT of large read collections on a GPU together with the CPU cores and on clusters of GPU nodes.
The current state of the art, \libcubwt~\cite{libcubwt} uses suffix sorting routines with the radix sort of CUB (\cref{sec:eval}).
Closest to our work is Rapid-PFP~\cite{FerroPGLB26}, which computes the prefix-free parse on the GPU but not the BWT.

\subparagraph*{Suffix sorting on the GPU.}
GPU SACAs use prefix doubling on radix sorts~\cite{Osipov12}, the skew algorithm~\cite{DeoK13,LiaoMZT15}, hybrids of both~\cite{WangBO16}, CPU--GPU hybrids~\cite{Sun14}, lexicographic naming~\cite{SunM09}, genome-specific techniques~\cite{PingaliTB17}, and several GPUs connected by NVLink~\cite{BurenJKHS19}.
All of them sort all $n$ suffixes of the text.
We are not aware of one that uses induced sorting, the basis of the fastest sequential SACAs~\cite{SACABench19}.
In general, the suffix array requires more device memory than the BWT.
Beyond suffix sorting, other building blocks of compressed indexes have been engineered for GPUs, such as minimal perfect hash functions~\cite{BezKLS23} and wavelet trees~\cite{PiochowiakKD25,FranzrebBR25}.

\section{Prefix-Free Parsing and the Original PFP-BWT}
\label{sec:pfp}

Boucher et al.~\cite{BoucherGKLMM19} introduce a BWT construction algorithm based in prefix-free parsing (PFP).
IT replaces the suffix sort of the text by a sort of the distinct phrase suffixes, which are few for repetitive texts.
Our GPU algorithm builds on their construction.

\subsection{The Parse}
\label{sec:pfp-parse}
PFP is parameterized by a window length $w \ge 1$ and a set of \emph{trigger strings} $\mathcal{E}$ of strings of length $w$.
Two sentinel symbols $\#$ and $\$$, i.e., symbols smaller than every character of the alphabet, are prepended and appended.
The parse is then computed on $\#\,T\,\$^w$.

A \emph{phrase} is a substring of $\#\,T\,\$^w$ from the beginning of an occurrence of an element of $\mathcal{E}' = \mathcal{E} \cup \{\#, \$^w\}$ to the end of the next one.
Thus, a phrase has exactly one proper prefix and one proper suffix in $\mathcal{E}'$ and no other substring in $\mathcal{E}'$.
Consecutive phrases overlap by the $w$ characters.
The \emph{dictionary} $\Dset$ is the set of distinct phrases.
The \emph{parse} $\Parse$ is the sequence of the phrases in text order, each identified by its lexicographic rank in $\Dset$.

\subsection{The Structure of the Phrase Suffixes}\label{sec:pfp-suffixes}

The set $\Suf$ of \emph{phrase suffixes} consists of the suffixes of length $> w$ of the elements of $\Dset$.
Every phrase suffix ends with the trigger string that ends its phrase.
The following lemmas show why sorting $\Suf$ and the parse can replace the sort of all suffixes of $T$.

\begin{lemma}[{\cite[Lemma~1]{BoucherGKLMM19}}]\label{lem:bgk-1}
The set $\Suf$ is prefix-free.
\end{lemma}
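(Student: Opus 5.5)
Suppose, for contradiction, that two distinct phrase suffixes $\alpha, \beta \in \Suf$ satisfy that $\alpha$ is a proper prefix of $\beta$. We use two structural facts from \cref{sec:pfp-parse}. First, each phrase has exactly one proper prefix and one proper suffix in $\mathcal{E}' = \mathcal{E} \cup \{\#, \$^w\}$, and no other occurrence of an element of $\mathcal{E}'$. Second, every phrase suffix has length $> w$ and ends with the trigger string that ends its phrase.

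The argument rests on a single observation. Since $|\alpha| > w$ and $\alpha$ ends with an element $e \in \mathcal{E}'$ of length $w$ (the lone $\#$ can only be a phrase prefix, so it never ends a phrase suffix), $e$ occurs in $\beta$ at positions $|\alpha|-w .. |\alpha|-1$. This occurrence starts at offset $|\alpha| - w \ge 1$ within $\beta$, so it is not a prefix of $\beta$. It also ends at offset $|\alpha|-1 < |\beta|-1$, so it is not the suffix of $\beta$ either.

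Now let $\beta$ be a suffix of the phrase $d \in \Dset$. Then this occurrence of $e$ is an occurrence of an element of $\mathcal{E}'$ inside $d$ that is neither $d$'s prefix trigger, which sits at offset $0$ of $d$, nor $d$'s suffix trigger. The prefix trigger is excluded because the occurrence starts at a positive offset of $\beta$, and $\beta$ starts at offset $\ge 0$ in $d$. The suffix trigger is excluded because the occurrence ends strictly before the end of $\beta$, and $\beta$ ends where $d$ ends. This contradicts the defining property of a phrase, namely that it contains no substring of $\mathcal{E}'$ other than its one proper prefix and one proper suffix. Hence no phrase suffix is a proper prefix of another, and $\Suf$ is prefix-free.

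The main obstacle is the careful reading of "no other substring in $\mathcal{E}'$" as a statement about occurrences (positions), not about distinct strings. We must exclude the case where the middle occurrence of $e$ coincides, as a string, with the first or last trigger of $d$ but at a different position. With $d$'s triggers $t_1$ and $t_2$, the phrase definition (from one trigger occurrence to the *next*) guarantees there is no trigger occurrence strictly between $t_1$'s start and $t_2$'s start. The occurrence of $e$ starts strictly after the start of $t_1$ (offset $\ge 1$) and strictly before the start of $t_2$ (it ends before the end of $d$, and both have length $w$). This gives the contradiction cleanly, including in the edge cases of the sentinels $\$^w$.
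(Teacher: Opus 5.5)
Your proof is correct and follows the paper's own argument. In both, the element of $\mathcal{E}'$ that ends $\alpha$ would lie strictly inside the phrase whose suffix is $\beta$, contradicting that a phrase runs only to the next trigger occurrence. Your distinction between occurrences and strings, and your handling of the sentinels, spell out what the paper leaves to its figure.
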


\noindent If $\alpha \in \Suf$ were a proper prefix of $\alpha\gamma \in \Suf$, the trigger string that ends $\alpha$ would lie in the interior of a phrase $v$ that ends with $\alpha\gamma$, see \cref{fig:preffree}.
Our algorithm also relies on the phrases being suffix-free, which is implicit in the proof of~\cite[Lemma~7]{BoucherGKLMM19}.

\begin{lemma}\label{lem:sufffree}
Let $u, v \in \Dset$ with $|u| < |v|$.  Then $u$ is not a suffix of $v$.
\end{lemma}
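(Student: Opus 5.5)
The argument mirrors the one sketched after \cref{lem:bgk-1}, now applied at the start of the phrases instead of at the end. Assume for contradiction that $u, v \in \Dset$ with $|u| < |v|$ and that $u$ is a suffix of $v$, so $v = \beta u$ with $|\beta| \ge 1$. By the definition of a phrase, $u$ begins with an element $e \in \mathcal{E}' = \mathcal{E} \cup \{\#, \$^w\}$; this is the occurrence that starts $u$. Since $u$ is a suffix of $v$, the string $e$ then occurs in $v$ at position $|\beta| \ge 1$. The aim is to show that this occurrence is neither the proper prefix nor the proper suffix of $v$ in $\mathcal{E}'$. That contradicts the fact recalled in \cref{sec:pfp-parse} that a phrase contains no occurrence of an element of $\mathcal{E}'$ other than its prefix and suffix trigger.

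First I split off the sentinel case $e = \#$. The symbol $\#$ occurs only once in $\#\,T\,\$^w$, namely at position $0$, so only the first phrase contains it. An occurrence of $\#$ at position $|\beta| \ge 1$ inside $v$ would therefore be a second $\#$, which is impossible. So $e \in \mathcal{E} \cup \{\$^w\}$, and $|e| = w$.

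Next I locate the occurrence of $e$ in $v$. It starts at position $|\beta| \ge 1$, so it is not the prefix occurrence of $v$. Because $|u| > w$, the strings $e$ and $u$ are different, so $e$ is a proper prefix of $u$. Hence the occurrence ends at position $|\beta| + w - 1 < |v| - 1$, and it is not the suffix occurrence of $v$ either. It is therefore an interior occurrence of an element of $\mathcal{E}'$, which gives the contradiction.

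The main obstacle is justifying $|u| > w$, together with the precise meaning of ``no other substring in $\mathcal{E}'$''. A phrase stretches from the start of one trigger occurrence to the end of the next one. If the next trigger occurrence overlapped the first, a phrase could in principle be short, or the two occurrences could coincide. I would first argue that consecutive triggers start at different positions, so every phrase has length at least $w+1$. If overlapping triggers are allowed, I would instead count occurrences by their starting position: the occurrence of $e$ at position $|\beta|$ still starts strictly after position $0$ and strictly before the start $|v| - w$ of the suffix trigger, since $|\beta| < |v| - |u| + (|u| - w) = |v| - w$. So it is a third trigger occurrence in $v$, which the phrase definition forbids. Stating the argument in terms of starting positions like this also handles the $\$^w$ case uniformly.
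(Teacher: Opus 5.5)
Your proof is correct and follows the paper's argument: the element of $\mathcal{E}'$ that starts $u$ would reappear in $v$ at offset $|v|-|u|>0$, and since $|u|>w$ this occurrence is neither the prefix nor the suffix trigger of $v$, which contradicts the definition of a phrase. The separate treatment of $\#$ and the justification of $|u|>w$ via distinct starting positions of consecutive trigger occurrences make explicit what the paper takes for granted; the $\#$ case does not need its own argument, because the general one already covers it since $|\#| = 1 \le w$.
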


\begin{figure}[tbp]
\centering
\begin{minipage}[t]{0.485\linewidth}
\centering
\adjustbox{max width=\linewidth}{\includegraphics{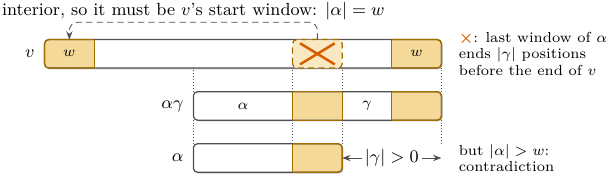}}
\caption{Visualization of \cref{lem:bgk-1}.
  The trigger string at the end of a proper prefix $\alpha$ of $\alpha\gamma \in \Suf$ would lie in the interior of the phrase $v$.}\label{fig:preffree}
\end{minipage}\hfill
\begin{minipage}[t]{0.485\linewidth}
\centering
\adjustbox{max width=\linewidth}{\includegraphics{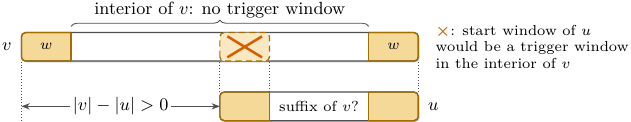}}
\caption{Visualization of \cref{lem:sufffree}.
  The trigger string that starts a shorter phrase $u$ would lie in the interior of $v$.}\label{fig:sufffree}
\end{minipage}
\end{figure}

\noindent If $u$ were a suffix of $v$, the element of $\mathcal{E}'$ that starts $u$ would occur in $v$ at offset $|v| - |u| > 0$. Since $|u| > w$, this occurrence is neither a prefix nor a suffix of $v$, which contradicts the definition of a phrase, see \cref{fig:sufffree}.
We now can sort strings based on their phrase suffixes, because by \cref{lem:bgk-1}, $\alpha$ and $\alpha'$ differ within their first $\min(|\alpha|, |\alpha'|)$ characters.

\begin{lemma}[{\cite[Lemma~2]{BoucherGKLMM19}}]\label{lem:bgk-2}
Let $\alpha\lex \alpha' \in \Suf$. Then $\alpha x \lex \alpha' x'$ for all strings $x$ and $x'$.
\end{lemma}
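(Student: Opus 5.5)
The argument rests on \cref{lem:bgk-1}: neither of $\alpha$ and $\alpha'$ is a proper prefix of the other. They are also distinct, since $\alpha \lex \alpha'$. Hence there is a position $k < \min(|\alpha|, |\alpha'|)$ with $\alpha[0..k-1] = \alpha'[0..k-1]$ and $\alpha[k] \neq \alpha'[k]$.

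Next I determine which way $\alpha[k]$ and $\alpha'[k]$ compare. By the definition of $\lex$ (a proper prefix counts as smaller), $\alpha \lex \alpha'$ holds either because $\alpha$ is a proper prefix of $\alpha'$, or because the first mismatch has the smaller character in $\alpha$. The first option is excluded by \cref{lem:bgk-1}. So $\alpha[k] < \alpha'[k]$ at the mismatch position $k$ found above.

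Now fix arbitrary strings $x$ and $x'$. Since $k < |\alpha|$ and $k < |\alpha'|$, the first $k+1$ characters of $\alpha x$ are exactly those of $\alpha$, and the first $k+1$ characters of $\alpha' x'$ are exactly those of $\alpha'$. The appended strings cannot affect positions $0..k$. Therefore $\alpha x$ and $\alpha' x'$ agree on positions $0..k-1$ and first differ at $k$, where $\alpha x$ has the smaller character. This gives $\alpha x \lex \alpha' x'$.

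The only subtle step is the case distinction in the definition of $\lex$: one must rule out the prefix case before concluding that a character mismatch exists within both strings. That is exactly what \cref{lem:bgk-1} supplies, so no real obstacle remains. The sentinels $\#$ and $\$$ are ordinary symbols smaller than every alphabet character, so the character comparison at position $k$ is well defined even when it involves them.
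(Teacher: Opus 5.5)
Your proof is correct and matches the paper's argument: prefix-freeness of $\Suf$ (\cref{lem:bgk-1}) puts the first mismatch of $\alpha$ and $\alpha'$ within their first $\min(|\alpha|, |\alpha'|)$ characters, so appending $x$ and $x'$ cannot change the comparison. The paper states this in a single sentence; you spell out the case distinction in the definition of $\lex$ that it leaves implicit.
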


\noindent We now map all suffixes of length $>w$ of the text to a phrase suffix.
Let $v$ be the phrase that starts with the last occurrence of an element of $\mathcal{E}'$ at or before the start of $x$.
The prefix of $x$ that ends where $v$ ends is a suffix of $v$ longer than $w$, hence in $\Suf$.
It is also the only one.%

\begin{lemma}[{\cite[Lemma~3]{BoucherGKLMM19}}]\label{lem:bgk-3}
Every suffix $x$ of $\#\,T\,\$^w$ with $|x| > w$ has exactly one prefix in
$\Suf$, which we denote by $f(x)\in\Suf$.
\end{lemma}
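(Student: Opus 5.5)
We prove existence and uniqueness separately. Write $S = \#\,T\,\$^w$ and let $x = S[i..]$ with $|x| > w$.

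\emph{Existence.} We choose the anchor as in the paragraph preceding the statement. Let $j \le i$ be the largest position at which an occurrence of an element of $\mathcal{E}'$ starts. Such a $j$ exists because $\#$ occurs at position $0$. Since $|x| > w$, the suffix $x$ does not lie entirely inside the final $\$^w$. The occurrence at $j$ therefore starts a phrase $v = S[j..k]$, where $S[k-w+1..k]$ is the next occurrence of an element of $\mathcal{E}'$. We need $k - w + 1 > i$, i.e., that the ending trigger starts strictly after $i$. This holds because $j$ is the last trigger start at or before $i$, and the next trigger after $j$ starts after $j$, hence after $i$. (The case where the next trigger is $\$^w$ is covered as well, since it starts at $|S|-w > i$ by $|x|>w$.) Then $\alpha = S[i..k]$ is a suffix of $v \in \Dset$ of length $k-i+1 \ge w+1 > w$, so $\alpha \in \Suf$, and $\alpha$ is a prefix of $x$.

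\emph{Uniqueness.} Suppose $\alpha$ and $\alpha'$ are both prefixes of $x$ lying in $\Suf$. Two prefixes of the same string are comparable under the prefix order, so one is a prefix of the other. By \cref{lem:bgk-1} the set $\Suf$ is prefix-free, so neither can be a proper prefix of the other, and hence $\alpha = \alpha'$. This gives $f(x)$ as well defined.

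The main subtlety lies in existence: we must ensure that the chosen occurrence really starts a phrase in the text, and that the ending trigger does not begin at or before $i$, which would make the suffix have length $\le w$. Both follow from the maximality of $j$ together with $|x|>w$. A minor edge case is overlap between $\#$ and other triggers. Since $\#$ is a single sentinel symbol, it is treated as an element of $\mathcal{E}'$ that starts the first phrase, and the argument is unchanged.
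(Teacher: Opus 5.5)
Your proof is correct and takes the same approach as the paper. Existence comes from the phrase that starts at the last occurrence of an element of $\mathcal{E}'$ at or before the start of $x$; its suffix beginning at that position is longer than $w$ because the next occurrence starts strictly after it. Uniqueness comes from the prefix-freeness of $\Suf$ (\cref{lem:bgk-1}), which you make explicit where the paper only says ``It is also the only one.''
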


We can then use the mapping $f(\cdot)$ to sort suffixes of the text based on the lexicographical order of the phrase suffixes.

\begin{lemma}[{\cite[Lemma~4]{BoucherGKLMM19}}]\label{lem:bgk-4}
Let $x$ and $x'$ be suffixes of $\#\,T\,\$^w$ with $|x|, |x'| > w$.  Then
$f(x) \lex f(x')$ implies $x \lex x'$.
\end{lemma}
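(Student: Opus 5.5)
By \cref{lem:bgk-3}, both $f(x)$ and $f(x')$ are well defined. By definition $f(x)$ is a prefix of $x$, so there is a string $y$ with $x = f(x)\,y$; likewise $x = f(x')\,y'$ for some string $y'$. The plan is to reduce the claim directly to \cref{lem:bgk-2}. That lemma compares two extended phrase suffixes, and here the extensions are exactly the remainders $y$ and $y'$.

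Suppose $f(x) \lex f(x')$. Both are elements of $\Suf$, so \cref{lem:bgk-2} applies with $\alpha = f(x)$, $\alpha' = f(x')$ and the arbitrary strings chosen as $y$ and $y'$. It gives $f(x)\,y \lex f(x')\,y'$, which is $x \lex x'$.

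The only step that needs care is the hypothesis of \cref{lem:bgk-2}. There, strictness relies on \cref{lem:bgk-1}: since $\Suf$ is prefix-free, $f(x) \lex f(x')$ cannot hold because $f(x)$ is a proper prefix of $f(x')$. Hence the two strings first differ at some position $k < \min(|f(x)|, |f(x')|)$, with the smaller character in $f(x)$. That first differing position lies inside both $x$ and $x'$, so appending $y$ and $y'$ cannot change the comparison. This is exactly why the convention that a proper prefix is smaller causes no trouble.

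Since \cref{lem:bgk-2} is stated as an earlier result, this observation is only a sanity check and not a real obstacle. The proof is just the instantiation above.
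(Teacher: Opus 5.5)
Your proof is correct and takes the same approach as the paper, which says only that the claim is \cref{lem:bgk-2} applied to $f(x)$ and $f(x')$, with the remainders of $x$ and $x'$ as the appended strings. There is one small slip in your first paragraph: you wrote $x = f(x')\,y'$ where you mean $x' = f(x')\,y'$.
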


\noindent This is \cref{lem:bgk-2} applied to $f(x)$ and $f(x')$.
Suffixes $x$ with $f(x) = \alpha$ form a \emph{block} of $\alpha$.
In suffix order, the blocks appear in the order of $\Suf$.
Therefore, sorting $\Suf$ sorts the suffixes of the text up to the order inside each block.
Further copies of $\$$ do not change the order, and $\#\,T\,\$^w$ takes the place of the smallest suffix $\$$.
Hence, the suffixes of $T\$$ can be sorted by their counterparts, which are longer than $w$ and start with phrase suffixes.

\begin{lemma}[{\cite[Lemma~5]{BoucherGKLMM19}}]\label{lem:bgk-5}
The permutation that sorts $T[0..n-1]\$^w, \dots, T[n-1]\$^w, \#\,T\,\$^w$ also
sorts the suffixes $T[0..]\$, \dots, T[n-1]\$, \$$ of $T\$$.
\end{lemma}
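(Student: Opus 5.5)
The plan is to compare the two lists element by element and show that every pairwise comparison gives the same result. Write $s_i = T[i..]\$$ for $0 \le i < n$, $s_n = \$$, $t_i = T[i..]\$^w$ for $i < n$, and $t_n = \#\,T\,\$^w$. It suffices to show that for all $i \neq j$ we have $s_i \lex s_j$ if and only if $t_i \lex t_j$. Then the sorting permutations coincide, and all elements are distinct on both sides, since the suffixes have different lengths and $\#$ occurs only once.

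The first case is $i, j < n$, where $t_i = s_i\,\$^{w-1}$. Suppose $s_i \lex s_j$. The usual convention needs care here because $\$$ is a sentinel. Since $s_i$ and $s_j$ both end with the unique symbol $\$$ at their last position and have different lengths, neither is a proper prefix of the other. The only way they could be in prefix relation would be for $\$$ to occur in the interior of the longer string, which is impossible because $T$ contains no $\$$. Hence they first differ at some position $k < \min(|s_i|,|s_j|)$. Appending the same string $\$^{w-1}$ to both leaves the strings unchanged up to position $k$, so the comparison at position $k$ is preserved and $t_i \lex t_j$. The converse follows by symmetry: the order is total and the elements are distinct.

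The second case involves the index $n$. Here $s_n = \$$ is the smallest suffix of $T\$$, because every other $s_j$ starts with a character of $\Sigma$, which is larger than $\$$. On the other side, $t_n$ starts with $\#$, and $\#$ is smaller than every character of $\Sigma$. Every other $t_j$ starts with $T[j] \in \Sigma$, so $t_n \lex t_j$. Thus the element with index $n$ is the minimum in both lists, and the comparisons involving $n$ agree.

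The main point needing care is the first case. One must argue explicitly that appending copies of $\$$ cannot create a prefix relation and cannot flip a comparison that was decided by a proper-prefix relation. The uniqueness and terminal position of the first $\$$ in each $s_i$ is exactly what rules this out.

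This comparison-level argument is all that is needed. The phrase-suffix lemmas (\cref{lem:bgk-3,lem:bgk-4}) are not required for this step, since the statement concerns only the padded strings themselves.
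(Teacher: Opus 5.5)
Your proposal is correct and follows essentially the paper's route. The paper justifies the lemma only by noting that further copies of $\$$ do not change the order of the suffixes and that $\#\,T\,\$^w$ takes the place of the smallest suffix $\$$; your pairwise comparison spells out both facts, including the detail the paper leaves implicit, namely that the unique terminal $\$$ rules out any proper-prefix relation among the strings $T[i..]\$$, so appending $\$^{w-1}$ cannot flip a comparison.
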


\subsection{From the Dictionary and the Parse to the BWT}
\label{sec:pfp-bwt}

Boucher et al.\ map every character $T\$[i]$ to the phrase suffix $\beta(i)$ that starts the next suffix $T\$[i+1..]$.
This is $f(T[i+1..]\$^w)$ for $i < n - 1$, $f(\#\,T\,\$^w)$ for $i = n - 1$, and $f(T\,\$^w)$ for $i = n$, the final $\$$.
By \cref{lem:bgk-4,lem:bgk-5}, the BWT consists of the characters mapped to the phrase suffixes, block by block in the order of $\Suf$.
We consider the order within a block.

\begin{lemma}[{\cite[Lemma~6]{BoucherGKLMM19}}]\label{lem:bgk-6}
  Assume that $\beta$ maps $m$ copies of the character $a$ to $\alpha \in \Suf$ and no other characters.
  Additionally, it maps a total of $o_\alpha$ characters to phrase suffixes smaller than $\alpha$.
  Then, the BWT characters at positions $o_\alpha, \dots, o_\alpha + m - 1$ are copies of $a$.
\end{lemma}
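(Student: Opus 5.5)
The plan is to show that the BWT positions $o_\alpha, \dots, o_\alpha + m - 1$ correspond exactly to the suffixes of $T\$$ that form the block of $\alpha$. The statement then follows because each BWT entry is the character preceding its suffix, and $\beta$ records precisely that character.

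First I fix the correspondence between characters and suffixes. For $0 \le i \le n$, the character $T\$[i]$ is the BWT character of the suffix $T\$[i+1..]$, taken cyclically, so $i = n$ gives the suffix $T\$[0..]$ preceded by the final $\$$. Via \cref{lem:bgk-5}, each such suffix has a counterpart among $T[1..]\$^w, \dots, T[n-1]\$^w, \#\,T\,\$^w$ and $T\,\$^w$, all of length $> w$. By the definition of $\beta$, $\beta(i)$ is exactly $f$ applied to this counterpart. So the multiset of characters mapped to a phrase suffix $\gamma$ is the multiset of BWT characters of the suffixes in the block of $\gamma$. The map $i \mapsto$ counterpart is a bijection onto the $n+1$ counterparts, and each counterpart has a unique $f$-value by \cref{lem:bgk-3}. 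Hence every suffix lies in exactly one block.

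Next I compute the positions of the block of $\alpha$ in the sorted order of the counterparts. By \cref{lem:bgk-4}, every counterpart $x$ with $f(x) \lex \alpha$ precedes every $x'$ with $f(x') = \alpha$. Likewise, every $x''$ with $\alpha \lex f(x'')$ comes after. So the block of $\alpha$ occupies a contiguous range of ranks. The number of counterparts before it equals the number of characters mapped to phrase suffixes smaller than $\alpha$, which is $o_\alpha$. The block has size $m$, since exactly $m$ characters are mapped to $\alpha$. By \cref{lem:bgk-5}, the same ranks $o_\alpha, \dots, o_\alpha+m-1$ hold in the suffix order of $T\$$, that is, in the BWT of $T\$$.

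Finally, the BWT characters in this range are the characters preceding these $m$ suffixes. These are exactly the characters mapped to $\alpha$, all copies of $a$ by assumption. Their internal order within the block is therefore irrelevant.

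The main obstacle is the bookkeeping at the boundary, namely the cyclic wrap for $i = n$ and the role of $\#\,T\,\$^w$. I will handle it by checking the two special cases explicitly against the definition of $\beta$ and the statement of \cref{lem:bgk-5}. One further point needs a remark. The BWT here, with primary index $q$ and no sentinel, differs from the BWT of $T\$$ only by removing the $\$$ entry. I would note that the lemma is stated for the BWT of $T\$$.
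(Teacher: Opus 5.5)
Your proposal is correct and follows the paper's argument, spelled out in more detail: by \cref{lem:bgk-4,lem:bgk-5} the BWT of $T\$$ lists the characters mapped by $\beta$ block by block in the order of $\Suf$, so the block of $\alpha$ occupies exactly the positions $o_\alpha, \dots, o_\alpha + m - 1$, and since all of its characters equal $a$, the order inside it is irrelevant. Your closing aside is not needed and is slightly off: under the paper's definition $\BWT[i] = T[(\SA[i]-1) \bmod n]$, the sentinel-free BWT is obtained from that of $T\$$ by dropping its first entry (the $T[n-1]$ preceding the suffix $\$$) and writing $T[n-1]$ in place of the $\$$, not by simply deleting the $\$$.
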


\noindent The block of $\alpha$ occupies exactly these $m$ positions, and their order does not matter if all characters are equal.
We call $o_\alpha$ the \emph{output offset} of $\alpha$ and such a phrase suffix \emph{easy}.
Whether a phrase suffix $\alpha \notin \Dset$ is easy follows from $\Dset$ and the phrase occurrences.
If all phrases of which $\alpha$ is a proper suffix have the same character before $\alpha$, then $\alpha$ is easy.
The run's length is the sum of the number of occurrences.
Otherwise, $\alpha$ is \emph{hard}, and the order inside its block matters.
Since the last $w$ characters of $\alpha$ form the trigger string that starts the next phrase, the suffix of $\Parse$ that starts at the next phrase describes the text after an occurrence of $\alpha$.

\begin{lemma}[{\cite[Lemma~7]{BoucherGKLMM19}}]\label{lem:bgk-7}
  Let $x$ and $x'$ be suffixes of $\#\,T\,\$^w$ that start with occurrences of $\alpha \in \Suf$.
  Also, let $\Parse_x$ and $\Parse_{x'}$ be the suffixes of $\Parse$ that encode the last $w$ characters of these occurrences of $\alpha$ and the remainders of $x$ and $x'$.
  If $x \lex x'$, then $\Parse_x \lex \Parse_{x'}$.
\end{lemma}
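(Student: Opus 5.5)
We prove the contrapositive in its natural form: we show that the order of $x$ and $x'$ is decided by the phrase sequences that follow the common prefix $\alpha$, and that comparing these phrase sequences character-wise is the same as comparing their ranks in $\Parse$. Write $x = \alpha y$ and $x' = \alpha y'$. Since the last $w$ characters of $\alpha$ form the trigger string $t$ that starts the next phrase, the suffix $t y$ of $\#\,T\,\$^w$ starts at a phrase boundary. It is the concatenation of the phrases $p_1 p_2 \cdots$ encoded by $\Parse_x$, where consecutive phrases overlap by $w$ characters. The same holds for $t y' = p'_1 p'_2 \cdots$ and $\Parse_{x'}$. Moreover, $x \lex x'$ holds if and only if $t y \lex t y'$.

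Next we compare the two phrase sequences phrase by phrase. Let $k$ be the first index with $p_k \neq p'_k$; such a $k$ exists because $x \neq x'$ and both sequences end with the unique phrase ending in $\$^w$. The prefixes $p_1 \cdots p_{k-1}$ (glued with overlaps) coincide, so the first difference between $t y$ and $t y'$ lies inside the overlapped images of $p_k$ and $p'_k$, aligned at the same starting position. Hence it suffices to show that $p_k \lex p'_k$ in $\Dset$ implies $p_k z \lex p'_k z'$ for all continuations, and the converse for the opposite order. Then $x \lex x'$ forces $p_k \lex p'_k$, i.e.\ a smaller rank at the first differing position of $\Parse_x$ and $\Parse_{x'}$, which gives $\Parse_x \lex \Parse_{x'}$.

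The key step, and the main obstacle, is that $p_k$ cannot be a proper prefix of $p'_k$, nor vice versa. If it were, $p_k z$ versus $p'_k z'$ would not be decided by the dictionary order, because a proper prefix counts as smaller. We rule this out as follows. Both phrases start with the same trigger string (the end of $p_{k-1}$), and $p_k$ ends with an element of $\mathcal{E}'$. If $p_k$ were a proper prefix of $p'_k$, this element would occur in $p'_k$ at a position that is neither its prefix nor its suffix, contradicting the definition of a phrase. This is the analogue of \cref{lem:bgk-1}, applied to whole phrases, which are themselves in $\Suf$ when longer than $w$. So $p_k$ and $p'_k$ differ within their first $\min(|p_k|,|p'_k|)$ characters, and by \cref{lem:bgk-2} their order determines the order of the continuations.

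Finally, we handle the edge cases. The phrases here are always longer than $w$, and the sentinels $\#$ and $\$$ are smaller than all alphabet characters, so the argument above applies without change. The suffix-freeness of \cref{lem:sufffree} is not needed for this direction; it only matters for identifying block members.
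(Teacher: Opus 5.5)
Your proposal is correct and follows essentially the paper's argument. The order of $x$ and $x'$ is decided at the first phrase where $\Parse_x$ and $\Parse_{x'}$ differ; the phrases lie in $\Suf$ and are therefore prefix-free (cf.\ \cref{lem:bgk-1}), so that first difference falls inside both phrases, and the parse encodes phrases by lexicographic rank. The only step you use implicitly, which the paper states first, is that $\alpha$ occurs at both $x$ and $x' \ne x$, so it cannot end with the unique $\$^w$; this guarantees that a next phrase exists after each occurrence and that $\Parse_x$ and $\Parse_{x'}$ are nonempty.
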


\noindent Since $\alpha$ occurs at both $x$ and $x' \ne x$, it does not end with the unique $\$^w$.
Thus, $\Parse_x$ and $\Parse_{x'}$ are nonempty.
Their first difference lies in the first phrase in which $\Parse_x$ and $\Parse_{x'}$ differ.
Since the parse orders phrases by lexicographic rank, the characters of a hard block appear in the BWT in the order of the parse suffixes that follow their occurrences.
If $\alpha$ is a \emph{whole phrase}, i.e., $\alpha \in \Dset$, the character before each of its occurrences is not part of $\alpha$ but lies at distance $w + 1$ from the end of the previous phrase, so the parse provides it.
Hence, the BWT can be computed from $\Dset$ and $\Parse$ alone.

\begin{figure}[tbp]
\centering
\includegraphics{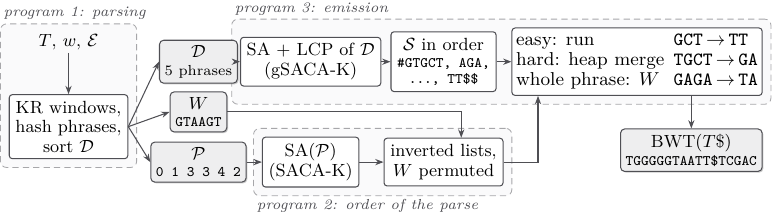}
\caption{Data flow of Big-BWT on \runningexample.}
\label{fig:thm-bgk-1}
\end{figure}

\subsection{The Implementation Big-BWT}\label{sec:pfp-impl}
Big-BWT,\footnote{\url{https://gitlab.com/manzai/Big-BWT}, last accessed 2026-10-22.} the implementation of Boucher et al., consists of three sequential programs that communicate through files, see \cref{fig:thm-bgk-1}.
The first program slides a window of length $w$ over the text and maintains its Karp--Rabin fingerprint with a rolling hash.
When the fingerprint is $0$, the current phrase ends with the window and the next phrase starts with it.
Each finished phrase is looked up by its 64-bit hash in a hash table that stores the distinct phrases with their frequencies.
Afterward, the dictionary is sorted, and the hashes in the parse are replaced by the ranks of their phrases.
For every occurrence of whole phrases, they store, the character at distance $w + 1$ from its end in an array $W[0..|\Parse|-1]$ stores.
The second program computes $\SA(\Parse)$ with SACA-K, writes the \emph{inverted list} of every phrase, i.e., its occurrences in the order of the parse suffixes that follow them (cf.\ \cref{lem:bgk-7}), and permutes $W$ accordingly.
The third program enumerates the phrase suffixes in lexicographic order with the suffix and LCP arrays of the concatenated dictionary (gSACA-K).
For each phrase suffix $\alpha$, they collect the characters that precede $\alpha$ in the phrases that end with it.
If they are all equal, they write a run whose length is the sum of the phrase frequencies (\cref{lem:bgk-6}).
Otherwise, it merges the inverted lists of the phrases with a heap and writes the preceding character of each popped occurrence (\cref{lem:bgk-7}), taken from $W$ if $\alpha$ is a whole phrase.

\section{Our PFP-BWT Variant for the GPU}
\label{sec:algo}
We now show how to replace the hash table, the inverted lists, $W$, and the heap by sorts and scans.
\Cref{tab:mapping} in \cref{app:bigbwt} compares the steps of both algorithms.
Our algorithm keeps the structure of the original, i.e., the parse, the blocks of \cref{sec:pfp-suffixes}, the order of $\Suf$ between blocks, the order of the parse inside hard blocks, and the easy/hard distinction of \cref{lem:bgk-6,lem:bgk-7}.
But all objects of the construction are derived from a single \emph{occurrence array} $\Docc[0..d-1]$ with one entry per phrase occurrence, where $d = |\Parse|$ (\cref{sec:algo-parse}).
$\Docc$ is sorted in three orders: (1) in text order by the parsing (\cref{sec:algo-parse}), (2) lexicographically by phrase for the dictionary and the parse (\cref{sec:algo-dict}), and (3) lexicographically by reversed phrase for the phrase suffixes and their occurrences (\cref{sec:algo-phases,sec:algo-emit}).
\Cref{fig:gpu-flow} shows the resulting data flow.

\begin{figure}[tbp]
\centering
\includegraphics{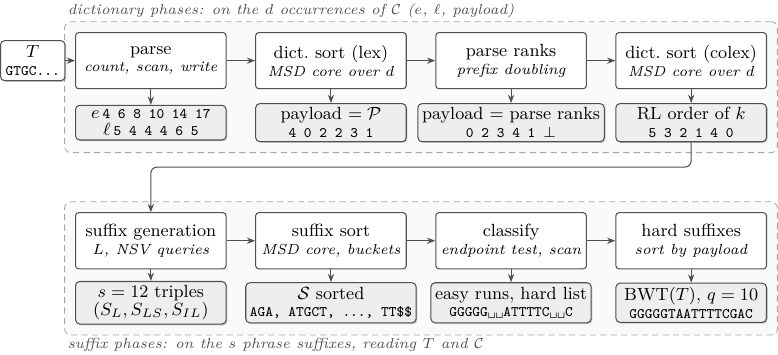}
\caption{Data flow of \gpupfpbwt in the configuration \code{cub6} on \runningexample.
  The top row shows the \emph{dictionary phases}, the bottom row the \emph{suffix phases}. NSV denotes next-smaller-value queries. In \Cref{sec:memory,sec:refinements}, we define buckets, the MSD core, and the configurations.
  $\bot$ marks the last occurrence.
  The gaps (\textvisiblespace) are filled by the hard suffixes.
}\label{fig:gpu-flow}
\end{figure}

\subsection{Parsing}\label{sec:algo-parse}

We parse $T\,\$^w$ without prepending $\#$.
Instead, $T[0..w-1]$ is a trigger string.
This keeps the text unchanged and yields the BWT directly in the convention of \libcubwt, i.e., the character that precedes position $0$ is $T[n-1]$. %
Let $w$ be the window length and $\mathcal{E}$ a set of trigger strings of length $w$.
For $1 \le i \le n - w$, position $i$ is a \emph{trigger position} if $T[i..i+w-1] \in \mathcal{E}$ or $T[i..i+w-1] = T[0..w-1]$.
Let $e_0 < \dots < e_{d-2}$ be the ends $i + w - 1$ of the windows at trigger positions $i$, $e_{d-1} = n + w - 1$ the end of $T\,\$^w$, and $e_{-1} = w - 1$.
Phrase $k$ is $T\,\$^w[e_{k-1} - w + 1..e_k]$, of length $\ell_k = e_k - e_{k-1} + w$.
The \emph{occurrence array} $\Docc[0..d-1]$ stores phrase $k$ as the entry $\Docc[k] = (e_k, \ell_k, k)$, i.e., its end, its length, and a \emph{payload} that initially is its text-order index.

The \emph{parse} $\Parse$, the \emph{dictionary} $\Dset$, and the phrase suffixes $\Suf$ are defined as in \cref{sec:pfp}.
We write $\suf(k, j)$ for the suffix of length $j$ of phrase $k$ and $\pre(k, j) = T[(e_k - j) \bmod n]$ for the character that precedes it. For $j = \ell_k$, this is the character before phrase $k$.
Phrase $0$ starts with the trigger string $T[0..w-1]$.
Every other phrase is delimited as in the original.
Hence, \cref{lem:bgk-1,lem:sufffree,lem:bgk-3,lem:bgk-4} hold for our parse, and \cref{lem:parse-rank} below adapts \cref{lem:bgk-7}.%

\subsection{Dictionary, Parse, and Parse Ranks}\label{sec:algo-dict}

We sort a copy of $\Docc$ by phrase content read left to right, breaking ties by length (\emph{LR order}).
We mark every occurrence whose phrase differs from that of its predecessor and propagate the group starts by pointer jumping.
Each group of equal phrases then receives its lexicographic rank, i.e., its identifier in $\Parse$.
Each occurrence of the copy writes its identifier to the payload of $\Docc[k]$, where $k$ is its text-order index.
Now, reading the payloads in text order yields $\Parse$.
The dictionary is never materialized and the frequency of a phrase is the size of its group.

We compute $\SA(\Parse)$ by prefix doubling on rank pairs.
Here, the payload is overwritten as follows.
Occurrence $k$ gets its \emph{parse rank}, i.e., the rank of the parse suffix starting at $k + 1$.
The last occurrence gets the sentinel rank $\bot$, which is smaller than all other ranks.
Ordering occurrences by their payload thus orders them by the text that follows them.

\begin{lemma}\label{lem:parse-rank}
Let $\alpha \in \Suf$, and let $k \ne k'$ be occurrences whose phrases end with $\alpha$.
The suffix of $T\,\$^w$ that starts with the occurrence of $\alpha$ at the end of phrase $k$ is smaller than the one for phrase $k'$ if and only if the payload of $k$ is smaller than that of $k'$.
\end{lemma}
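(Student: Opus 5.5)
The plan is to reduce the statement to \cref{lem:bgk-7}, adapted to our parse of $T\,\$^w$ without $\#$, and then to check that the payload ordering matches the ordering of parse suffixes. Let $x$ and $x'$ be the suffixes of $T\,\$^w$ that start at positions $e_k - |\alpha| + 1$ and $e_{k'} - |\alpha| + 1$, respectively. Both start with $\alpha$. Since $k \ne k'$ and $\alpha$ ends with the trigger string of its phrase, the positions differ. Hence $x \ne x'$, and neither suffix is a prefix of the other, because the text ends with the unique $\$^w$. So exactly one of $x \lex x'$ and $x' \lex x$ holds. The same argument shows that the payloads of $k$ and $k'$ are distinct: they are ranks of distinct parse suffixes, or $\bot$. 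It therefore suffices to prove one direction, that $x \lex x'$ implies that the payload of $k$ is smaller than that of $k'$. The converse then follows by exchanging $k$ and $k'$.

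First, I handle the sentinel. If $k = d-1$, then $\alpha$ ends with $\$^w$. Because $\$^w$ occurs only once, $\alpha$ cannot end another phrase $k'$, so this case cannot arise with $k \ne k'$. Thus neither payload is $\bot$, and both phrases $k+1$ and $k'+1$ exist. The payload of $k$ is the rank of the parse suffix $\Parse[k+1..]$. Its phrases spell $T\,\$^w[e_k - w + 1..]$, i.e.\ the last $w$ characters of the occurrence of $\alpha$ followed by the rest of $x$. This parse suffix is exactly $\Parse_x$ in the sense of \cref{lem:bgk-7}, and the same holds for $k'$.

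Next, I prove the \cref{lem:bgk-7} step for our parse. Write $x = \alpha\, y$ and $x' = \alpha\, y'$. Then $x \lex x'$ if and only if $y \lex y'$, which is equivalent to comparing $\tau y$ with $\tau y'$, where $\tau$ is the common trigger suffix of length $w$ of $\alpha$. The strings $\tau y$ and $\tau y'$ are the concatenations of the phrases of $\Parse_x$ and $\Parse_{x'}$ with the overlaps of $w$ characters removed. Let $j$ be the first index at which the phrase sequences differ, with phrases $u \ne u'$ there. Both $u$ and $u'$ start at the same text offset relative to $\tau y$ and $\tau y'$.

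\begin{itemize}
\item \emph{Equal lengths.} If $|u| = |u'|$, the first character mismatch of $x$ and $x'$ lies inside $u$ and $u'$. So $u \lex u'$ if and only if $x \lex x'$.
\item \emph{Unequal lengths.} If $|u| \ne |u'|$, I claim the strings still differ within $\min(|u|, |u'|)$ characters. If $u$ were a proper prefix of $u'$, the trigger string ending $u$ would lie strictly inside $u'$. That position is neither its prefix nor its suffix, which contradicts the definition of a phrase. This is the prefix-freeness argument of \cref{lem:bgk-1}, applied to whole phrases.
\end{itemize}

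In both cases the order of $u$ and $u'$ in LR order, and hence of their lexicographic ranks, decides the comparison of $x$ and $x'$. Since prefix doubling sorts parse suffixes by rank sequences, $\Parse_x$ precedes $\Parse_{x'}$ exactly when $x \lex x'$.

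The main obstacle is the bookkeeping at the ends of the text. Phrase $0$ starts with the trigger $T[0..w-1]$ rather than with $\#$, so I must make sure it is still delimited like every other phrase. A parse suffix also ends with the unique last phrase, so I must check that two parse suffixes are never prefixes of each other. The second point follows because the phrase containing $\$^w$ occurs once, so the shorter parse suffix would have to match the last phrase against a non-last one. These are also the checks that \cref{lem:bgk-1,lem:sufffree} are claimed to cover for our parse. I would verify them first, then run the phrase-by-phrase comparison above.
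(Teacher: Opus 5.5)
Your proof is correct and follows essentially the same route as the paper: you rule out the last occurrence via the uniqueness of $\$^w$, strip the common part of $\alpha$ so that only the text suffixes starting at phrases $k+1$ and $k'+1$ remain to be compared, and decide that comparison at the first differing phrase using the prefix-freeness of phrases. Your explicit check that no parse suffix is a proper prefix of another, which guarantees that a first differing phrase exists, and your reduction of the equivalence to a single direction are small additions that the paper's proof leaves implicit.
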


\begin{proof}
For $0 \le i < d$, let $y_i = T\,\$^w[e_{i-1} - w + 1..]$ be the suffix of $T\,\$^w$ that starts with phrase $i$.
Neither $k$ nor $k'$ is the last occurrence $d - 1$, i.e., the only phrase that contains $\$$.
If it ended with $\alpha$, then $\alpha$ would be $\$^w$ preceded by at least one character.
Such a string occurs only once in $T\,\$^w$
Hence, the payloads of $k$ and $k'$ are the ranks of the nonempty parse suffixes $\Parse[k+1..]$ and $\Parse[k'+1..]$ among all suffixes of $\Parse$.

Now, let $j = |\alpha|$.
Phrase $k + 1$ starts $w$ characters before the end of phrase $k$.
The suffix that starts with the occurrence of $\alpha$ in $k$ and $k'$ are therefore $\alpha[0..j-w-1]\,y_{k+1}$ and $\alpha[0..j-w-1]\,y_{k'+1}$.
Both differ in $y_{k+1}$ and $y_{k'+1}$ do.
We now show that $y_i \lex y_{i'}$ if and only if $\Parse[i..] \lex \Parse[i'..]$ for $i \ne i'$.
Every phrase is an element of $\Suf$, so the phrases are prefix-free (\cref{lem:bgk-1}).
Let $\delta$ be the first index with $\Parse[i+\delta] \ne \Parse[i'+\delta]$.
The text $y_i$ consists of phrase $i$ followed by every later phrase without its first $w$ characters.
The first distinct phrases $u = \Parse[i+\delta]$ and $u' = \Parse[i'+\delta]$ differ at a position inside both.
Hence, $y_i \lex y_{i'}$ holds if and only if $u \lex u'$, i.e., if and only if $\Parse[i..] \lex \Parse[i'..]$.
With $i = k + 1$ and $i' = k' + 1$, $\Parse[k+1..]$ and $\Parse[k'+1..]$ are ordered correctly by the order of their payloads.
\end{proof}

\subsection{Phrase Suffixes and Their Occurrence Sets}\label{sec:algo-phases}

A second sort of $\Docc$ yields the phrase suffixes and the occurrences that end with each of them.
We sort $\Docc$ by phrase read right to left from $e_k$, breaking ties by length and then by the preceding character $\pre(k, \ell_k)$.
In this \emph{RL order}, occurrences that share a phrase suffix are adjacent. %
Now, $k$ denotes an RL position, and $e_k$, $\ell_k$, $\suf(k, j)$, and $\pre(k, j)$ refer to the occurrence at this position.
Let $L[k]$ be the length of the longest common \emph{suffix} of the phrases at RL positions $k$ and $k - 1$.
The suffixes $\suf(k, j)$ with $j \le L[k]$ already end the phrase at position $k - 1$, and those with $j \le w$ are not in $\Suf$.
Position $k$ therefore \emph{generates} only the remaining ones, i.e., $\suf(k, j)$ for $\max(w, L[k]) < j \le \ell_k$.
\begin{lemma}\label{lem:interval}
  Let $\alpha \in \Suf$ with $|\alpha| = j$.
  In the RL order, the occurrences $k$ with $\suf(k, j) = \alpha$ form a contiguous interval.
  The leftmost $k$ is the unique position in it with $L[k] < j$.
\end{lemma}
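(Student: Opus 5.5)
The plan is to treat the RL order as a lexicographic order on reversed phrases and read off both claims from that. Write $r_k$ for the reversal of the phrase at RL position $k$. The RL sort orders positions by $r_k$, with a proper prefix smaller; the extra tie-breaks only act on equal reversed phrases, so $r_0 \preceq r_1 \preceq \dots \preceq r_{d-1}$ in the lexicographic sense. The condition $\suf(k,j) = \alpha$ says exactly that $r_k$ has $\rev{\alpha}$ as a prefix of length $j$. It requires $\ell_k \ge j$, and since $\alpha$ is a phrase suffix, the prefix must match $\rev{\alpha}$ exactly.

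\textbf{Contiguity.} This is the standard fact that in a lexicographically sorted list of strings, those with a fixed prefix $p = \rev{\alpha}$ form a contiguous range. Suppose $k_1 < k_2 < k_3$, and $r_{k_1}$ and $r_{k_3}$ both start with $p$. Since $r_{k_1} \preceq r_{k_2} \preceq r_{k_3}$, the first $j$ characters of $r_{k_2}$ are squeezed between two copies of $p$. One subtlety needs an explicit check: $r_{k_2}$ could in principle be a proper prefix of $p$, since shorter sorts first. But then $r_{k_2}$ would be lexicographically smaller than $r_{k_1}$, which extends $p$, contradicting $r_{k_1} \preceq r_{k_2}$. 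So $r_{k_2}$ has length at least $j$ and must begin with $p$, and the set of positions with $\suf(k,j)=\alpha$ is an interval $[a,b]$.

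\textbf{Leftmost position.} By definition, $L[k]$ is the length of the longest common prefix of $r_k$ and $r_{k-1}$. Take the convention $L[0] = 0$, or more generally that $L$ at the first position is smaller than every $j > w$.
\begin{itemize}
\item For $a < k \le b$, both $r_k$ and $r_{k-1}$ start with $p$, so $L[k] \ge j$.
\item For $k = a$, either $a = 0$, or $a-1$ lies outside the interval. In the latter case, $r_{a-1}$ does not have the length-$j$ prefix $p$, so its common prefix with $r_a$ is shorter than $j$, giving $L[a] < j$.
\end{itemize}
Hence $a$ is the unique position in the interval with $L[k] < j$.

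\textbf{Main obstacle.} The step needing most care is making precise that the RL order is a genuine lexicographic order on the reversals, with "proper prefix is smaller." The paper says "breaking ties by length," and this must be read so that a phrase whose reversal is a proper prefix of another's comes first. Lemma~\ref{lem:sufffree} actually makes this case impossible: one phrase cannot be a proper suffix of another. So distinct phrases already differ within their first $\min$ length characters of the reversal, and the length and preceding-character tie-breaks only order occurrences of identical phrases. I would invoke Lemma~\ref{lem:sufffree} at that point, so that the sandwich argument above goes through without worrying about prefix cases.
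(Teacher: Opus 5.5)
Your proof is correct and follows the same route as the paper. The RL order is lexicographic on reversed phrases, so the occurrences sharing the prefix $\rev{\alpha}$ are contiguous, and the longest-common-suffix length with the predecessor drops below $j$ exactly at the left end of that run. You also fill in details the paper leaves implicit: the convention for $L[0]$, and the proper-prefix case, which Lemma~\ref{lem:sufffree} rules out so that the length and preceding-character tie-breaks only order identical phrases.
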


\begin{proof}
  The RL order sorts occurrences by their reversed phrases $\rev{v}$.
  Occurrences with a common prefix $\rev{\alpha}$ are contiguous and have $\suf(k, j) = \alpha$.
  Now, the common-prefix length with the predecessor is $<j$ exactly at the start of a run sharing a prefix of length $\ge j$.
\end{proof}

By \cref{lem:interval}, this generation produces each element of $\Suf$ exactly once, at its leftmost occurrence.
Each of the $s = |\Suf|$ elements is represented by a triple containing its leftmost RL occurrence $S_L = \lb$, its length $S_{LS} = j$, and its \emph{interval length} $S_{IL} = \rb - \lb$, where $[\lb, \rb]$ is its occurrence interval.
Here, $\rb + 1$ is the next position with $L < j$, which can be found by a next-smaller-value (NSV) query on $L$.
Finally, the triples are sorted by the strings $\suf(S_L, S_{LS})$ they represent.

\subsection{Generating the BWT}\label{sec:algo-emit}

A scan over the interval sizes $S_{IL} + 1$ of the sorted phrase suffixes yields the output offset of every phrase suffix.
For a phrase suffix of length $j$ with interval $[\lb, \rb]$, the classification in easy and hard is a comparison of $\pre(\lb, j)$ and $\pre(\rb, j)$.

\begin{lemma}\label{lem:monotone}
  Let $[\lb, \rb]$ be the interval of $\alpha \in \Suf$ with $|\alpha| = j$.
  Then $\pre(\lb, j) \le \pre(\lb+1, j) \le \dots \le \pre(\rb, j)$.
  Thus, all occurrences of $\alpha$ are preceded by the same character if and only if $\pre(\lb, j) = \pre(\rb, j)$.
\end{lemma}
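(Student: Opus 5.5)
The monotonicity follows from how the RL order breaks ties, so I would derive it straight from the sort key. Recall that $\Docc$ is sorted by phrase read right to left from $e_k$, ties broken first by length and then by $\pre(k,\ell_k)$. By \cref{lem:interval}, the positions $\lb,\dots,\rb$ are exactly the occurrences whose phrases end with $\alpha$. For each such $k$, $\pre(k,j)$ is the character immediately before this suffix of length $j$. It is a character of the phrase itself when $j<\ell_k$, and it is the character before the whole phrase when $j=\ell_k$. In both cases, reading right to left, it is the $(j+1)$-st character of the key for position $k$: either inside the reversed phrase, or the tie-breaking character appended after it.

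The core step is a comparison of two adjacent positions $k,k+1$ in $[\lb,\rb]$. Both keys start with $\rev{\alpha}$, so their order is decided from position $j$ of the extended key onward. I would split into cases.

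\emph{Case $j<\ell_k$ and $j<\ell_{k+1}$.} The order is the order of the reversed phrases, and the character at index $j$ is $\pre(\cdot,j)$. Sortedness gives $\pre(k,j)\le\pre(k+1,j)$.

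\emph{Case $j=\ell_k=\ell_{k+1}$.} The two phrases are equal, namely $\alpha$, and the length tie-breaker is equal too. The next key is $\pre(\cdot,\ell)$, which gives the inequality directly.

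\emph{Mixed case, one phrase equal to $\alpha$ and the other longer.} This case would break the clean argument, because a proper prefix is ordered as smaller regardless of the next character. I would use \cref{lem:sufffree} to exclude it: if $\alpha$ is itself a phrase $u$ with $|u|<|v|$ and $\alpha$ is a suffix of $v$, then $u$ is a suffix of $v$, a contradiction. So the lengths in the interval are either all equal to $j$ or all greater than $j$, and only the first two cases occur.

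With the mixed case excluded, adjacent monotonicity holds, and chaining it gives the full chain $\pre(\lb,j)\le\dots\le\pre(\rb,j)$. The ``if and only if'' then follows immediately: a nondecreasing sequence is constant exactly when its endpoints are equal. I expect the only real obstacle to be the mixed case, and \cref{lem:sufffree} is the tool that rules it out.
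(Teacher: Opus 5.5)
Your proof is correct and follows essentially the same route as the paper. \cref{lem:sufffree} rules out intervals that mix a phrase of length $j$ with longer phrases; the all-longer case then follows from the lexicographic order on reversed phrases, and the all-equal case from the final tie-break by $\pre(k,\ell_k)=\pre(k,j)$, with your adjacent-pair chaining being only a more explicit rendering of the same argument.
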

\begin{proof}
  Let $v_k$ be the phrase at RL position $k$.
  By \cref{lem:sufffree}, either $j < \ell_k$ for every $k \in [\lb, \rb]$, or $j = \ell_k$ for every such $k$ and all these occurrences carry the same phrase.
In the first case, $\pre(k, j)$ is the character that follows $\rev{\alpha}$ in $\rev{v_k}$. %
Since the RL order is lexicographic on reversed phrases, the sequence is nondecreasing.
In the second case, all reversed phrases and all lengths are the same.
Hence, they are sorted by their last tie-break, which is exactly $\pre(k, \ell_k) = \pre(k, j)$.
\end{proof}

If they are equal, the phrase suffix is easy and contributes a run of $\rb - \lb + 1$ copies of that character.
Otherwise, it is hard and its occurrences are sorted by the parse rank in their payload, and their preceding characters are written in this order.
Whole phrases need no special case, since $\pre(k, \ell_k)$ is read from the text like every other preceding character.

\begin{theorem}\label{prop:emit}
Processing the elements of $\Suf$ in lexicographic order and outputting, for each element $\alpha$, the preceding characters of its interval $[\lb,\rb]$, as a run when $\pre(\lb,|\alpha|)=\pre(\rb,|\alpha|)$ and ordered by the parse rank in the payload otherwise, produces $\BWT(T)$.
\end{theorem}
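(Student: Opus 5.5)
The argument reduces the BWT to sorting the suffixes of $T\,\$^w$ and then fixes the order first between blocks and then inside them.

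\textbf{Reduction to phrase suffixes.} I would start from the bijection between BWT positions and text positions. For each text position $i\in[0,n-1]$, consider the suffix $x_i = T[i..]\,\$^w$. It has length $>w$, so by \cref{lem:bgk-3} it has a unique prefix $f(x_i)\in\Suf$.

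In our parse, phrase $0$ starts at position $0$ with the trigger $T[0..w-1]$. The trigger positions then guarantee that $f(x_i)$ is exactly $\suf(k,j)$ for the unique phrase occurrence $k$ that contains $i$ with $i\ge e_k-\ell_k+1$ and $j=e_k-i+1>w$. So every text position $i$ is charged to exactly one pair (RL position $k$, length $j$) with $\suf(k,j)=f(x_i)$. Its preceding character is $\pre(k,j)=T[(i-1)\bmod n]$, which is exactly $\BWT$'s cyclic convention.

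Conversely, every occurrence $k$ and every $j\in(w,\ell_k]$ correspond to one text position. I would check that windows ending in $\$$ are excluded, so the count is exactly $n$. By \cref{lem:interval}, the occurrences with $\suf(k,j)=\alpha$ are precisely $[\lb,\rb]$. Hence the blocks partition the $n$ positions, and the scan over $S_{IL}+1$ gives the correct output offsets.

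\textbf{Order between and within blocks.} Next I would argue that sorting the $x_i$ agrees with sorting the cyclic rotations of $T$, as \libcubwt's convention requires. I would adapt \cref{lem:bgk-5}: the $\$^w$ padding is smaller than every character, and $T[0..w-1]$ re-occurs as a trigger. I expect this to be the main obstacle. It needs care for the rotation starting at $0$ versus suffixes that run into the $\$$ padding, and I would likely have to assume the primitive/unique-rotation case or argue via the primary index $q$.

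With that in place, \cref{lem:bgk-4} says that blocks appear in the lexicographic order of $\Suf$. This matches processing the sorted triples.

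\textbf{Inside a block.} If $\pre(\lb,j)=\pre(\rb,j)$, then \cref{lem:monotone} gives that all preceding characters are equal, so the order is irrelevant (\cref{lem:bgk-6}). Otherwise, \cref{lem:parse-rank} says that the suffixes starting at these occurrences of $\alpha$ are ordered exactly by payload. Sorting by parse rank therefore outputs the preceding characters in suffix order. This covers whole phrases uniformly, since $\pre(k,\ell_k)$ is read directly from $T$.

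Concatenating the blocks then yields $\BWT(T)$.
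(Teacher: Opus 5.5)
Apart from one step, your proposal is the paper's proof written out in more detail. The $n$ suffixes $x_i = T[i..]\,\$^w$ are partitioned into blocks ordered as $\Suf$ (\cref{lem:bgk-3,lem:bgk-4}), and the block of $\alpha$ is the RL interval $[\lb,\rb]$ (\cref{lem:interval}). Easy blocks are handled by \cref{lem:monotone,lem:bgk-6} and hard blocks by \cref{lem:parse-rank}. The character before $x_0$ is $T[n-1]$ because $\pre$ is read cyclically. Your explicit count $\sum_k (\ell_k - w) = e_{d-1} - e_{-1} = n$ makes precise the paper's remark that the interval sizes sum to $n$.

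The step that does not work is the one you flag as the main obstacle: showing that sorting the $x_i$ agrees with sorting the cyclic rotations of $T$. Rotation order is not the target. The paper defines $\BWT[i] = T[(\SA[i]-1) \bmod n]$, where $\SA$ is the suffix array of $T$ and a proper prefix counts as smaller; only the preceding character is taken cyclically.

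The agreement you want is also false, even for primitive texts. For $T = \texttt{bab}$, the suffixes sort as $1, 2, 0$ ($\texttt{ab} \lex \texttt{b} \lex \texttt{bab}$), but the rotations sort as $1, 0, 2$ ($\texttt{abb} \lex \texttt{bab} \lex \texttt{bba}$). The two resulting strings, $\texttt{bab}$ and $\texttt{bba}$, differ by more than where $T[n-1]$ is placed. So neither a primitivity assumption nor the primary index can rescue this route.

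What you need is only that the $x_i$ are ordered like the suffixes $T[i..]$, and this is immediate:
\begin{itemize}
\item If $T[i..]$ and $T[i'..]$ differ at a position inside both, then $x_i$ and $x_{i'}$ differ at the same position.
\item If $T[i'..]$ is a proper prefix of $T[i..]$, then $i < i'$. Here $x_{i'}$ has a $\$$ at offset $n - i'$, which exists since $w \ge 1$, while $x_i$ has the character $T[i + n - i']$ there. Hence $x_{i'} \lex x_i$.
\end{itemize}
This is \cref{lem:bgk-5} without the entry $\#\,T\,\$^w$. That entry only stood in for the sentinel suffix $\$$, which does not exist here, and this is why no $\$$ is output. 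With this replacement your argument goes through and coincides with the paper's.
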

\begin{proof}
  The suffixes of $T\,\$^w$ that start in $T$ are partitioned into the blocks of the phrase suffixes, ordered by the lexicographic order of $\Suf$ (\cref{lem:bgk-3,lem:bgk-4}).
  The block of $\alpha$ consists of the $\rb - \lb + 1$ occurrences in its interval (\cref{lem:interval}).
  Hence, the interval sizes sum to $n$, and their prefix sums are the correct block starts.
  Inside a block, the order is that of the parse ranks in the payload (\cref{lem:parse-rank}).
  When all preceding characters coincide, the order inside the block is irrelevant (\cref{lem:bgk-6}), which is exactly the easy case of \cref{lem:monotone}.
  Finally, since we do not prepend $\#$ and read $\pre(k, j)$ cyclically, the suffix starting at position $0$ is preceded by $T[n-1]$, and no $\$$ is output.%
\end{proof}

\subparagraph*{The Primary Index.}\label{sec:algo-primary}

As a drop-in replacement for \libcubwt, our construction must also return the primary index $q$ with $\SA[q] = 0$.
We derive $q$ after the classification, while the sorted triples and the parse ranks are still available.
Let $k_0$ be the RL position of the occurrence with the smallest end, i.e., of phrase~$0$, and let $\ell_0$ be its length.
The suffix starting at $0$ begins with the element of $\Suf$ of length $\ell_0$ whose interval contains $k_0$, see \Cref{fig:primary}.

\begin{lemma}\label{lem:primary}
Let $\alpha_0$ be the element of $\Suf$ that is a prefix of the suffix starting at text position $0$,
$[\lb, \rb]$ its interval, and $o_{\alpha_0}$ its output offset.  Then
$q = o_{\alpha_0} + |\{\, k \in [\lb, \rb] : \rank(k) < \rank(k_0) \,\}|$, where $\rank$ is
the parse rank in the payload.
\end{lemma}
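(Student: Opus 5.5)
The plan is to read $q$ off the block structure established in \cref{prop:emit}. Its rank among all suffixes splits into the suffixes in earlier blocks plus the suffixes inside its own block that precede it. First, I identify the block. Phrase $0$ starts at text position $0$, because $T[0..w-1]$ is its starting trigger string, and it has length $\ell_0 > w$. So the suffix $x_0 = T\,\$^w[0..]$ begins with the whole phrase $0$. By \cref{lem:bgk-3}, $f(x_0)$ is the unique element of $\Suf$ that is a prefix of $x_0$. The whole phrase $\suf(k_0,\ell_0)$ is such an element, so $f(x_0)=\alpha_0$ with $|\alpha_0|=\ell_0$. By \cref{lem:interval}, the RL position $k_0$ then lies in the interval $[\lb,\rb]$ of $\alpha_0$.

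Next, I count the suffixes in earlier blocks. By \cref{lem:bgk-4} together with \cref{lem:bgk-5}, every text suffix $x$ with $f(x)\lex\alpha_0$ precedes $x_0$, and every suffix with $\alpha_0\lex f(x)$ follows it. So the number of suffixes smaller than $x_0$ that lie outside its block is the total size of the blocks of phrase suffixes smaller than $\alpha_0$. By the argument in \cref{prop:emit}, each block has size $S_{IL}+1$ and the prefix sums give the block starts, so this total is exactly $o_{\alpha_0}$. I also need to confirm that the translation to $T$ and $\SA$ with the cyclic convention does not shift indices. Since we never prepend $\#$ and the $n$ positions of $T$ correspond one-to-one to the $n$ occurrences counted, this is the same bookkeeping as in \cref{prop:emit}.

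Then I count within the block. The suffixes of the block of $\alpha_0$ correspond bijectively to the occurrences $k\in[\lb,\rb]$, each being the suffix that starts with the copy of $\alpha_0$ at the end of phrase $k$. By \cref{lem:parse-rank}, for $k\neq k_0$ this suffix is smaller than $x_0$ if and only if $\rank(k)<\rank(k_0)$. The parse ranks of distinct occurrences are distinct, since they are ranks of distinct parse suffixes and at most one occurrence carries $\bot$. Hence the count of earlier suffixes in the block is $|\{k\in[\lb,\rb]:\rank(k)<\rank(k_0)\}|$, and $k_0$ itself is excluded by the strict inequality. Adding the two counts gives $q$.

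The main obstacle is the degenerate case in which $\alpha_0$ is the unique phrase containing $\$$, i.e.\ phrase $0$ is the last phrase ($d=1$). Then \cref{lem:parse-rank} does not directly apply. I expect to handle it separately: the interval is then the singleton $\{k_0\}$, the count is $0$, and the formula still holds.
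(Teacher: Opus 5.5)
Your proposal is correct and follows the same route as the paper: the block of $\alpha_0$ starts at $o_{\alpha_0}$ by \cref{prop:emit}, and within that block the suffixes are ordered by parse rank by \cref{lem:parse-rank}, so $q$ is $o_{\alpha_0}$ plus the number of occurrences in $[\lb,\rb]$ with smaller rank. Your additional steps only make explicit what the paper leaves implicit: identifying $\alpha_0$ with phrase~$0$ via \cref{lem:bgk-3}, placing $k_0$ in its interval via \cref{lem:interval}, and treating the case $d=1$, which needs no separate argument because no $k \ne k_0$ exists there.
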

\begin{proof}
  The positions of $\BWT$ that belong to the occurrences of $\alpha_0$ form the interval $[o_{\alpha_0}, o_{\alpha_0} + \rb - \lb]$ (\cref{prop:emit}).
  Within it, the occurrences appear in increasing order of their parse rank.
  This holds both for easy intervals and for hard ones.
  The position of occurrence $k_0$ is therefore $o_{\alpha_0}$ plus the number of occurrences in the interval with a smaller rank.
\end{proof}

\begin{figure}[tbp]
\centering
\includegraphics{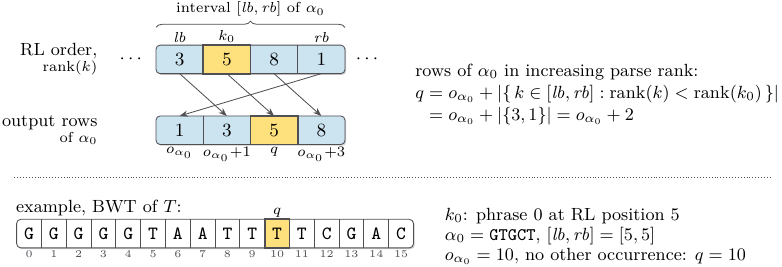}
\caption{The primary index. The occurrences in $[\lb, \rb]$ map to the BWT positions $[o_{\alpha_0}, o_{\alpha_0} + \rb - \lb]$ in increasing parse rank, here $q = o_{\alpha_0} + 2$ (top).
  In \runningexample, $\alpha_0 = \texttt{GTGCT}$ with $[\lb, \rb] = [5, 5]$ and $o_{\alpha_0} = 10$, so $q = 10$ (bottom).}\label{fig:primary}
\end{figure}

\section{Engineering PFP-based BWT Construction on the GPU}
\label{sec:engineering}
A simple direct implementation of the algorithm needs far more memory than \libcubwt.
We therefore show how to minimize the peak resident device memory in practice.
The input text and the output take $n$ bytes each, i.e., we only consider byte alphabets.
We also limit the input to at most 4\,GiB, as this covers GPUs with up to 40\,GiB memory but allows us to use 32-bit indices.
The occurrence array $\Docc$ takes $12d$ bytes for its $d = |\Parse|$ entries of three 4-byte values on all realistic inputs.
Each phrase suffix is stored as a $10$-byte triple.
Later, we show that the suffix sort can processes \emph{buckets} of suffixes with the same first character separately.

\subsection{Improving the Memory Footprint} \label{sec:memory}
In the baseline (opt~0), the suffix sort is a character-by-character refinement.
For a suffix $x$ with phrase end $e(x)$ and length $\ell(x)$, the padded character sequence is $c_t(x) = T[e(x) - \ell(x) + 1 + t]$ for $t < \ell(x)$ and $0$ otherwise.
The order $\lex$ is the lexicographic order on padded sequences, with ties broken by length.
For each unfinished suffix we need a position and a rank.
Maximal runs of equal rank form \emph{rank groups}.
In the $t$th iteration, we refine the groups by $c_t$, sorting \emph{$B$-blocks} of $B = 512$ elements with $H = 256$ histogram slots each.
A group that fits into one block is finished by a bitonic sort in shared memory.
After every iteration, we discard~\cite{DementievKMS08,BingmannGK18,FischerK19,FlickA15} the finished suffixes.
The refinement maintains two invariants.
At every compaction point, every unfinished suffix $x$ satisfies the following.
\begin{enumerate}[nosep,label=(I\arabic*)]
\item\label{inv:rank} $\rank(x) = |\{\, y : y \lex x \text{ and }
  y \notin \mathrm{group}(x) \,\}|$, counted over all $s$ suffixes, finished or not.
  Thus, the rank is constant on a group, and once a group is ordered internally, its member $i$ (counted from $0$) has the final position
  $\rank + i$.
\item\label{inv:agree} After iteration $t$, all members of a rank group
  agree on $c_0, \dots, c_t$.
\end{enumerate}

Let $s_t$ be the number of unfinished suffixes at the start of iteration $t$ and $b_t$ the number of $B$-blocks that cover the rank groups.
The baseline keeps thirteen arrays of length $s$, i.e., $56$ bytes per suffix consisting of the triples ($10$ bytes) with cached phrase ends $e(x)$, the \emph{suffix ends} ($4$ bytes), the sorting state ($16$ bytes) with two cached 8-byte words of upcoming characters ($16$ bytes), and output arrays for the sorted triples ($10$ bytes). They are allocated before the memory of the interval computation is freed.
Three auxiliary buffers hold $H$ histogram slots of $4$ bytes per block, so each needs $4 \max(H b_t, s_t)$ bytes in iteration $t$.
A group $G$ gets $\lceil |G| / B \rceil \ge 1$ blocks. %
Since the buffers only grow they may need up to $12 \max_t \max(H b_t, s_t)$ bytes, see \cref{fig:baseline}. %

\begin{figure}[tbp]
\centering
\includegraphics{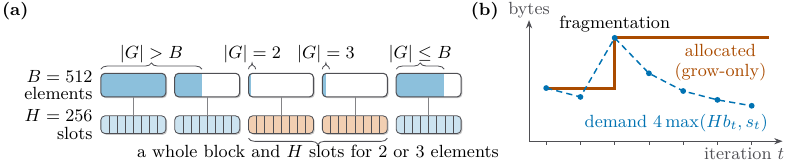}
\caption{Memory profile of the baseline.
  (a)~Groups of $2$ or $3$ elements occupy a whole block and its histogram slots.
  (b)~The $4 \max(H b_t, s_t)$ byte peak.}\label{fig:baseline}
\end{figure}

\subparagraph*{Allocation hygiene}
To improve this, we allocate the $42$ bytes per element of sorting state and output arrays only after the $\Theta(s + d)$-byte memory of the interval computation is freed.
To this end, we obtains every Thrust temporary from one cached allocator. %

\subparagraph*{Finishing small groups early.}
Next, we finalize small groups before they receive a block.
After each discarding, one $64$-thread block per group sorts every group of at most $B$ elements in shared memory and writes it to the output arrays.
A second discarding step removes these groups \emph{before} the blocks and histogram slots of the next iteration are created.
Every group that still receives blocks then has more than $B = 2H$ elements.
Hence, $H b_t < s_t$, and the three auxiliary buffers take at most $12 \max_t s_t \le 12 s$ bytes in total, see \Cref{fig:blocks}. %

\begin{figure}[tbp]
\centering
\includegraphics{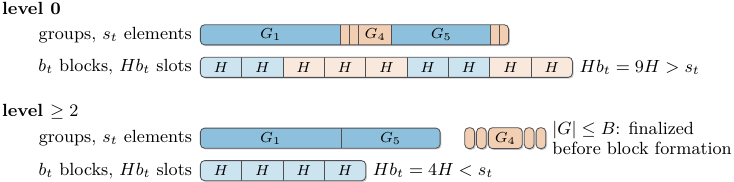}
\caption{Block formation at level~$0$ and when finishing small group early ($H = B/2$).
}\label{fig:blocks}
\end{figure}

\subparagraph*{Recomputations.}
Next, we drop the two cached 8-byte words of upcoming characters, as they can be recomputed from $T$ and the suffix end.

\begin{lemma}\label{lem:load}
For every suffix $x$ and iteration $t$, the
$[\,i \le e(x)\,] \cdot T[i]$ with $i = e(x) - \ell(x) + 1 + t$ and
$e(x) = \Docc[S_L(x)]$ contains exactly the value that the baseline extracts from its cached 8-byte word, including all padding cases.
\end{lemma}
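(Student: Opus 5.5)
We prove \cref{lem:load} by comparing the recomputed value with the baseline's padded character $c_t(x)$, case by case. The baseline's cached 8-byte words store, for the current iteration, the character $c_t(x) = T[e(x) - \ell(x) + 1 + t]$ if $t < \ell(x)$ and $0$ otherwise. So it suffices to show two things: that $e(x)$ read from $\Docc[S_L(x)]$ is the same phrase end the baseline used to fill its cache, and that the Iverson bracket $[\,i \le e(x)\,]$ reproduces the padding condition $t < \ell(x)$.

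The first step establishes the index identity. The triple of $x$ stores $S_L(x) = \lb$, its leftmost RL occurrence, and $S_{LS}(x) = \ell(x)$, the suffix length $j$. The baseline caches $e(x)$ as exactly the end of the occurrence at RL position $S_L(x)$. Hence $\Docc[S_L(x)]$, read in RL order and projected to its end component, equals the cached $e(x)$. By \cref{lem:interval}, $\suf(\lb, j) = x$, so $x$ occupies $T\,\$^w[e(x) - \ell(x) + 1..e(x)]$. Its $t$th character therefore sits at position $i = e(x) - \ell(x) + 1 + t$ of $T\,\$^w$.

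Next, we split into cases on $t$.
\begin{enumerate}[nosep]
\item If $t \ge \ell(x)$, then $i > e(x)$. The bracket is $0$, which matches the baseline's padding value $0$. Because the product is $0$ regardless of $T[i]$, no out-of-range access affects the result; in the implementation the load is guarded or clamped.
\item If $t < \ell(x)$ and $i \le n - 1$, then the bracket is $1$ and $T[i]$ is exactly the character the baseline cached.
\item If $t < \ell(x)$ and $n \le i \le e(x)$, then $x$ reaches into the padding $\$^w$. This can only happen for the suffixes of the last phrase, whose end is $e_{d-1} = n + w - 1$.
\end{enumerate}

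The main obstacle is the third case. Here $T[i]$ does not exist, while the baseline stored the sentinel $\$$. I would first check how the baseline encodes $\$$. If it encodes $\$$ as $0$, the same as padding, the claim holds provided the recomputation treats $T[i] = 0$ for $i \ge n$, i.e., the text buffer is read with a bounds guard that yields $0$. I would then argue that the conventions coincide, since the baseline's cached word is built by the same guarded loads from $T$ with $\$$ mapped to $0$. Finally, I would note that encoding $\$$ as $0$ does not merge distinct suffixes, because ties are broken by length and $\$^w$ occurs only once (\cref{lem:parse-rank}). This yields equality in every case, including all padding cases.
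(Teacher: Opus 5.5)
Your outer structure matches the paper's (identify the cached end with $\Docc[S_L(x)]$ and let the Iverson bracket play the role of the padding), but the step that carries the lemma is postulated rather than proved. You start by asserting that the baseline's cached words yield $c_t(x)$ in iteration $t$. Once that is granted, all that remains is the triviality $i \le e(x) \iff t < \ell(x)$.

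The lemma, however, is about the value the baseline \emph{extracts from its cached 8-byte word}, so you have to derive it from how that word is built and read. With word base $t_0 = t - (t \bmod 8)$, the baseline:
\begin{itemize}
\item loads the bytes at positions $e - \ell + 1 + t_0 + i$ for $i = 0, \dots, 7$;
\item replaces every byte whose position exceeds $e$ by $0$;
\item packs byte $i$ at bit offset $8i$;
\item extracts byte $t \bmod 8$, i.e., the byte at position $e - \ell + 1 + t_0 + (t \bmod 8) = e - \ell + 1 + t$.
\end{itemize}
The per-byte rule ``position $> e$ becomes $0$'' is exactly the factor $[\,i \le e(x)\,]$. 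This is where ``including all padding cases'' has content, e.g., for a word that straddles the phrase end and mixes real and zeroed bytes. You should also note that the cached suffix end is written once and never modified. Only then does its identification with $\Docc[S_L(x)]$ hold in every iteration, not just initially.

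Your third case is a symptom of the same omission. The baseline and the recomputation read the same buffer at the same position $i$, so they agree there whatever the buffer holds beyond $T[n-1]$. No assumption about the encoding of $\$$ is needed, and the conditional ``if it encodes $\$$ as $0$'' should be replaced by this observation. The closing remark that encoding $\$$ as $0$ does not merge distinct suffixes, with its appeal to \cref{lem:parse-rank}, concerns the correctness of the sort rather than the equality of the two values. It can be dropped.
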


\begin{proof}
  With the word base $t_0 = t - (t \bmod 8)$, the baseline loads the eight bytes at the positions $e - \ell + 1 + t_0 + i$ for $i = 0, \dots, 7$, replaces positions beyond $e$ by $0$, packs byte $i$ at bit offset $8i$, and extracts byte $t \bmod 8$.  This is the value at position $e - \ell + 1 + t$.
  The cached suffix end equals $\Docc[S_L(x)]$ and is never modified.
\end{proof}

\subparagraph*{MSD bucketing by the first character.}
In one most-significant-digit (MSD) step, we now partition the suffixes into buckets $\mathcal{B}_c = \{x : c_0(x) = c\}$ with one counting pass and one scatter.
Sorting now happens one bucket at a time, starting at iteration $t = 1$.
The sorting state is allocated once, for the $\bmax = \max_c |\mathcal{B}_c|$ entries of the largest bucket.

\begin{lemma}[Bucket decomposition]\label{lem:bucket}
The order $\lex$ orders $\mathcal{B}_0 \lex \mathcal{B}_1 \lex \dots \lex \mathcal{B}_{\sigma+1}$
blockwise. Within each bucket, the loop started with
$\rank(x) \equiv \mathrm{base}_c := \sum_{c' < c} |\mathcal{B}_{c'}|$ and $t = 1$
computes the correct final positions
$[\mathrm{base}_c, \mathrm{base}_c + |\mathcal{B}_c|)$.
\end{lemma}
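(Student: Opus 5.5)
The plan has two parts. First I show that $\lex$ orders the buckets blockwise by $c_0$. Then I show that running the refinement loop inside a single bucket, seeded with the bucket base as rank, reproduces the final positions of the global sort restricted to that bucket.

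For the first part, recall that $\lex$ compares the padded sequences $c_0(x), c_1(x), \dots$ lexicographically and breaks ties by length. The first coordinate of this comparison is $c_0$. Hence $c_0(x) < c_0(y)$ implies $x \lex y$, whatever the remaining characters or lengths are. The padding value $0$ occurs only for $t \ge \ell(x) > w \ge 1$, so $c_0$ is always a real character. Ordering buckets by their character value, with indices up to $\sigma+1$ covering the whole padded range, gives $\mathcal{B}_0 \lex \dots \lex \mathcal{B}_{\sigma+1}$. It follows that the final positions of $\mathcal{B}_c$ in the global order are exactly the interval $[\mathrm{base}_c, \mathrm{base}_c + |\mathcal{B}_c|)$.

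For the second part, I argue that the state after the MSD step is precisely the state the baseline would reach after iteration $t=0$, restricted to $\mathcal{B}_c$. I check the two invariants for the starting configuration in which every $x \in \mathcal{B}_c$ has rank $\mathrm{base}_c$ and forms a single group.
\begin{itemize}[nosep]
\item \ref{inv:agree} holds for $t=0$, because all members share $c_0 = c$ by the definition of the bucket.
\item \ref{inv:rank} holds because, by the first part, the suffixes $y \lex x$ outside the group are exactly those in buckets $\mathcal{B}_{c'}$ with $c' < c$. Their number is $\mathrm{base}_c$.
\end{itemize}
From then on, each iteration $t \ge 1$ only compares members of the same rank group by $c_t$. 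It never inspects suffixes of other buckets, apart from the counts already folded into the rank. I then argue inductively that the refinement preserves \ref{inv:rank} and \ref{inv:agree} under exactly the same update rules as in the baseline, since the baseline's correctness argument only uses these invariants. Groups are therefore finalized at $\mathrm{rank} + i$, which lies inside the bucket's interval.

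The main obstacle is making the inductive step precise without restating the whole baseline correctness proof. The rank update inside a group adds the sizes of subgroups with smaller $c_t$. I must confirm that this update never depends on global quantities such as $s_t$ or block counts across buckets. I also need that the small-group finishing and the discarding act locally per group, so that processing buckets separately, with state sized $\bmax$, is observationally identical. I would phrase this as "the baseline run on all suffixes, restricted to $\mathcal{B}_c$, after iteration $0$ coincides with the bucketed start". Since groups evolve independently, the two runs then agree group by group.
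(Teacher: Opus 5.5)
Your proposal is correct and follows the paper's proof. Since $c_0$ is never a padding character, the buckets are contiguous under $\lex$; the start with one group per bucket and rank $\mathrm{base}_c$ satisfies \ref{inv:rank} and \ref{inv:agree}; and because iterations $t \ge 1$ only compare members of the same group, \ref{inv:rank} places bucket $c$ in $[\mathrm{base}_c, \mathrm{base}_c + |\mathcal{B}_c|)$. The paper only asserts that elements of different groups are never compared, whereas you also check that discarding and small-group finishing act group-locally, so your version is slightly more careful.
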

\begin{proof}
  The character $c_0$ is never a padding character, since $\ell(x) \ge 1$.
  Therefore, the buckets are contiguous in the order $\lex$.
  The initialization makes the whole bucket one rank group with rank $\mathrm{base}_c$, i.e.,  invariant \ref{inv:rank} for the group structure with one group per bucket.
  This is because the suffixes outside the group that are smaller are in smaller buckets.
  Invariant \ref{inv:agree} holds for $t = 0$ by construction.
  For $t>0$, elements of different groups are never compared, making the buckets are independent.
  By invariant \ref{inv:rank}, the bucket $c$ land in $[\mathrm{base}_c, \mathrm{base}_c + |\mathcal{B}_c|)$.
  These intervals are disjoint and cover $[0, s)$.
\end{proof}

Now, only the triples with their cached suffix ends, the output arrays, and the \emph{bucket permutation} computed by the scatter (freed after the bucket loop) have $s$ entries.
The four sorting-state arrays and the auxiliaries have $\bmax$ entries.
Together with $\Docc$, the input, the BWT output, and $20$-byte block descriptors for $b \le \lceil \bmax / B \rceil + \bmax / H$ blocks, the device memory resident during the suffix sort is at most
\begin{equation}\label{eq:level4}
  \underbrace{14s}_{\text{triples}} + \underbrace{10s}_{\text{output}}
  + \underbrace{4s}_{\text{bucket permutation}}
  + \underbrace{16 \bmax}_{\text{sorting state}}
  + \underbrace{12 \bmax}_{\text{auxiliaries}}
  + 12d + 2n + O(\bmax / B)\ \text{bytes}.
\end{equation}
At the beginning, the bound is $56s + 12 \max_t \max(H b_t, s_t) + 12d + 2n$ bytes, with $\max_t H b_t \in [s/2, H s]$, see \Cref{fig:thm-main}.
Let $\theta$ be the most frequent first character's relative frequency.
Then $\bmax = \theta s$ and the sum reduces from at least $44s$ bytes %
to $28 \theta s$ bytes.
On the Pizza\&Chili text \texttt{english} ($\theta \approx 0.17$) we could more than halve the peak, see \cref{tab:texts} in \cref{sec:eval}.

\begin{figure}[tbp]
\centering
\adjustbox{max width=\linewidth}{\includegraphics{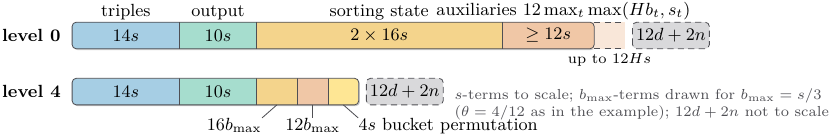}}
\caption{Per-suffix device memory of the suffix sort at level~$0$ and level~$4$ (\cref{eq:level4}).
  The $s$-terms are drawn to scale, the $\bmax$-terms for $\theta = 1/3$ as in \runningexample.
  The term $12d + 2n$ is not to scale.}\label{fig:thm-main}
\end{figure}

\subsection{Performance improvements}\label{sec:refinements}
In the previous section, we up to halved the memory requirements of our implementation.
Unfortunately, this also reduced the performance significantly.
Now, we show, how to improve the throughput while also slightly improving the memory requriements.

\subsubsection{In-Place Bucketed Output}\label{sec:ref-level56}

Up till now, we have keept two copies of the triples, i.e., the unsorted arrays that the refinement reads and the output arrays to which it writes finished suffixes.
We now remove the second copy.
First, a stable least-significant-digit pass (a histogram, a scan, and a scatter through one 4-byte and one 2-byte temporary) moves the triples of each first-character bucket into a contiguous slice.
Second, consecutive buckets form chunks of at most $\max(\bmax, s/K)$ suffixes ($K = 256$ by default).
A finished suffix records its final rank in a per-chunk permutation instead of copying its triple.
Once a chunk is sorted, the permutation is applied in place to its slice. %
Third, the easy/hard classification replaces its three $4s$-byte prefix-sum auxiliaries by one byte per suffix.

On skewed inputs, the largest first-character bucket dominates the chunk size.
Therefore, we presort by the first \emph{two} characters ($65\,536$ buckets, via a second stable pass).
This bounds $\bmax$ by the frequency of the most frequent bigram.
Now, the bucketing pass holds the triples ($10s$ bytes), the two temporaries ($6s$), $\Docc$, the input, and the output, i.e., $16s + 12d + 2n$ bytes.
During the sort, the triples, the cached suffix ends ($4s$), and the sorting state of the chunk are required.
The sorting state takes $28$ bytes per chunk slot for the refinement of \cref{sec:memory} (the \emph{refinement core}) and $44$ for the MSD sort introduced next in \cref{sec:ref-msd}. %
With the MSD sort, the memory during the sorting is $14s + 12d + 2n + 44\max(\bmax, s/K)$ bytes.
For $K = 256$ and chunks of $\max(\bmax, s/K)$ suffixes, the bucketing is the peak of the suffix phases unless a single bucket exceeds $s/22$, i.e., $4.5\%$ of $\Suf$, see \Cref{fig:level6}.

\begin{figure}[tbp]
\centering
\includegraphics{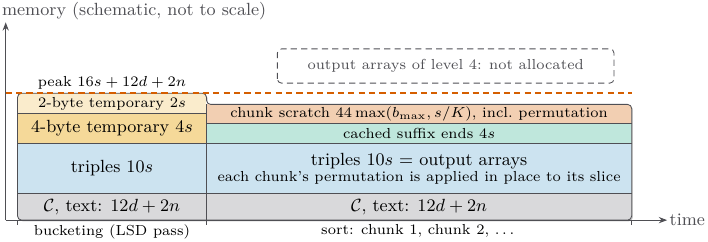}
\caption{Memory usage of the suffix phases during the bucketing pass and during the chunk-by-chunk sort as described in \cref{sec:ref-level56}.}\label{fig:level6}
\end{figure}

\subsection{MSD Radix Sort}\label{sec:ref-msd}

Now, we let the \emph{MSD radix sort} compare eight characters per round with two stable radix-sort passes of CUB.
We use it for both \emph{dictionary sorts}, i.e., the LR and the RL sort of $\Docc$ and also for the suffix chunks.
Every active element carries a rank and a character offset.
A round sorts the elements by the 8-character key at their offset (padded with $0$) and, stably, by rank, so that equal keys within a rank group become adjacent.
Two max-scans mark group starts and \emph{runs} of equal keys.
Every run receives the rank ``group rank $+$ run start $-$ group start''.
In the RL dictionary sort, the key of an exhausted phrase is its preceding character, which we can use as tie-break without a separate pass.
This needs $44$ bytes of space per chunk.

A round costs a fixed part (eight kernel launches, three scans, and one host synchronization) plus a part proportional to the number of active elements.
A chunk needs $\lceil \lambda / 8 \rceil$ rounds, where $\lambda$ is the length of the longest common prefix of two of its suffixes.

A small input has not to be split into $K$ chunks.
We therefore set the chunk size to $\max(\bmax, s/K, \min(2^{22}, s))$ instead of $\max(\bmax, s/K)$, at a negligible $44 \cdot 2^{22}$ bytes of additional space.

Suffixes with long common prefixes spend one round per eight shared characters.
After the discarding step, every remaining suffix therefore compares the 8-byte words that follow its key with those of the first element of its rank group, up to the first difference or $32$ words.
The group minimum $\mu$ of these counts, computed with an \code{atomicMin} at the group's first index, advances the offset of every member by $8\mu$ further characters, see \Cref{fig:skip}.

\begin{lemma}\label{lem:skip}
  Let $G$ be a rank group after the compaction of a round, with all members at offset $\tau$.
  Let $\mu$ be the minimum over $x \in G$ of the number of consecutive 8-byte words of $x$ from offset $\tau + 8$ on that equal the corresponding words of the first member $g$ of $G$.
  Advancing the offset of every member of $G$ to $\tau + 8 + 8\mu$ does not change the sorted order computed by the subsequent rounds.
\end{lemma}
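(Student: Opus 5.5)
We argue that the skip only drops comparisons whose outcome is already fixed as ``equal within $G$''. The subsequent rounds can therefore never distinguish members of $G$ on the skipped characters. We use the invariants of the MSD sort, which are the analogues of \ref{inv:rank} and \ref{inv:agree}. The rank is constant on $G$ and counts the smaller suffixes outside $G$. All members of $G$ agree on their padded characters $c_0,\dots,c_{\tau+7}$, since the round at offset $\tau$ just grouped them by equal 8-character keys. Groups never interact in later rounds, because every round sorts stably by rank first. So it suffices to show that the relative order produced for the members of $G$ is unchanged.

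First we fix the meaning of $\mu$. By its definition and the minimum, for every $x\in G$ and every $0\le i<\mu$, the word of $x$ at offset $\tau+8+8i$ equals the word of $g$ at that offset. By transitivity, all members of $G$ therefore agree on the padded characters $c_{\tau+8},\dots,c_{\tau+8+8\mu-1}$. Padding is included here, since words are read with the same zero padding as in \cref{lem:load}. Exhausted suffixes thus compare consistently, and ties by length are still decided later: the padded sequence eventually differs, or the tie-break applies as before.

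Next we compare the two executions. Without the skip, the next $\mu$ rounds process $G$ at offsets $\tau+8,\dots,\tau+8+8(\mu-1)$. In each of these rounds all keys inside $G$ are equal. Each such round therefore forms a single run covering $G$ and assigns it the unchanged rank ``group rank $+0$''. It does not reorder $G$ relative to other groups, because the secondary sort is stable by rank and the group's rank interval is fixed by \ref{inv:rank}. After these $\mu$ no-op rounds the state of $G$ is the same set, with the same rank and offset $\tau+8+8\mu$. This is exactly the state that the skip produces. From then on both executions run identical rounds on identical inputs, so the final positions $\rank+i$ coincide.

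The main obstacle is the claim that a round with all keys equal inside a group is truly a no-op. The group may in the meantime be interleaved with other groups that advanced at different offsets, and the within-group order may change through the stable sort. For the first point, we rely on the fact that later rounds only compare elements with equal rank, so other groups' offsets are irrelevant. For the second point, the within-group order before the group is fully split carries no meaning: the final positions depend only on the rank values assigned after splitting, so a permutation inside a single rank group does not affect the output. A secondary point to check is that $\mu$ is capped at $32$ and computed after compaction, which keeps all members at the common offset $\tau$ as the statement assumes. The cap only makes the skip shorter, and a shorter skip is covered by the same argument.
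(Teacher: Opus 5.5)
Your proof is correct, but it closes the argument differently from the paper. Both start from the same observation: by the minimum and transitivity through $g$, all members of $G$ agree on the characters in $[\tau+8, \tau+8+8\mu)$.

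\emph{The paper} then argues about the target order. Together with \ref{inv:agree} for the characters before $\tau+8$, the order of $G$ under $\lex$ is decided by the characters from $\tau+8+8\mu$ on and the length rule, which is exactly what the subsequent rounds examine.

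\emph{You} argue about the execution instead. Each of the $\mu$ skipped rounds would see equal keys throughout $G$ and keep it as one run with unchanged rank. So the unskipped algorithm reaches the same state for $G$ that the skip produces.

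Your route proves literal equality with the unskipped computation, which matches the statement's wording. It also does not actually need \ref{inv:agree} below $\tau+8$. The paper's route is shorter and ties the skip directly to correctness under $\lex$, but it implicitly uses that the rounds sort a group correctly from any common starting offset.

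Two minor remarks on your write-up:
\begin{itemize}
\item Your sentence that from then on both executions run identical rounds on identical inputs holds only per group and up to a shift of $\mu$ rounds, because other groups are at different stages. Your group-independence observation is what justifies this, and it should be cited as the reason.
\item Your hedge about within-group permutations is unnecessary. With all keys of $G$ equal, the stable key pass followed by the stable rank pass preserves the internal order of $G$ exactly. Stability, not the claim that final positions depend only on ranks assigned after splitting, is what fixes the relative order of identical strings, since they never split into singleton runs.
\end{itemize}
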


\begin{proof}[Proof of \cref{lem:skip}]
By invariant \ref{inv:agree}, which the MSD sort maintains like before.
For $x, y \in G$ and any word index $i < \mu$, the $i$th words
of $x$ and of $y$ both equal that of $g$.
Hence, $x$ and $y$ agree on all characters in $[\tau + 8, \tau + 8 + 8\mu)$.
The order of $G$ under $\lex$ is therefore decided by the characters from $\tau + 8 + 8\mu$ on and the length rule.
\end{proof}

\begin{figure}[tbp]
\centering
\includegraphics{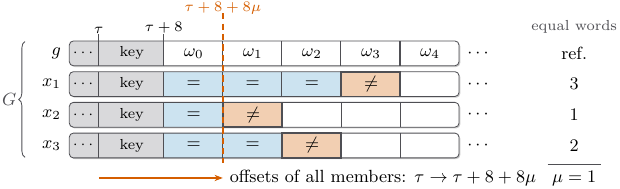}
\caption{LCP skipping in a rank group $G$ at offset $\tau$ (schematic).
  Equal words are shaded and marked $=$, up to the first differing word ($\ne$).
  With the minimum count $\mu = 1$, all members advance to the offset $\tau + 8 + 8\mu$.}\label{fig:skip}
\end{figure}

\subsection{Saving More Space}\label{sec:ref-space}

Before, the $16s + 12d + 2n$ bytes, $4s$ bytes are taken by the interval lengths $S_{IL}$ and the temporary through which they are permuted.
On repetitive inputs, the dictionary sorts determine the peak, c.f., \cref{sec:eval}.
We now reduce this even further.
To this end, we first compute the interval lengths only after the suffix sort, as described in  \cref{sec:algo-phases}.
In \emph{phrase-sequential} order, i.e., by RL position and then it, by increasing suffix length.
This lowers the bucketing cost to $12s + 12d + 2n$ bytes.
After the sort, $10s + 16d + 2n$ bytes are used.%
Second, before, we run both dictionary sorts over all $d$ entries at once, with $44d$ bytes of memory next to the $36d$ bytes of $\Docc$ and its copies.
Now, we bucket the entries by their first (LR sort) or last (RL sort) two characters with a stable 16-bit radix sort instead.
We sort consecutive buckets in $16$ chunks, with initial ranks set to the bucket starts and the first round at offset~$2$.
Equal phrases share a bucket and keep their relative order.%

\subsection{Prefix-Doubling}\label{sec:ref-doubling}

Even with LCP skipping, the MSD sort needs up to $\lceil \lambda / 8 \rceil$ rounds for a common prefix of length $\lambda$, each of which pays the fixed cost.
During the suffix sort replaced by prefix doubling~\cite{ManberM93,LarssonS07}, which needs $O(\log \lambda)$ rounds.
For a suffix and a distance $h$, doubling needs the rank of the suffix that starts $h$ characters later.
We sort a text instead of the set $\Suf$.
The \emph{dictionary text} $\Dict$ is the concatenation of the phrases that generate at least one element of $\Suf$, in RL order followed by a separator $\texttt{0}$.
Every element of $\Suf$ is a suffix of $\Dict$ at a position that follows from its phrase and length.

\begin{lemma}\label{lem:dict-order}
For two elements $\alpha \ne \alpha'$ of $\Suf$ at positions $i$ and $i'$ of $\Dict$,
$\alpha \lex \alpha'$ if and only if $\Dict[i..] \lex \Dict[i'..]$.
\end{lemma}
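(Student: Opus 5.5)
The proof reduces both directions to \cref{lem:bgk-1}: two distinct phrase suffixes already differ at some position inside both of them, and the suffixes of $\Dict$ that start with them inherit that first difference. I would first pin down the structure of $\Dict$. The dictionary text is a concatenation $v_1 v_2 \cdots v_m \texttt{0}$ of phrases in RL order. The element $\alpha$ of length $j$ generated by phrase $v_a$ sits at position $i = \mathrm{start}(v_a) + |v_a| - j$. Hence $\Dict[i..] = \alpha\, v_{a+1} \cdots v_m \texttt{0}$, i.e., $\alpha$ followed by some string $x$. Likewise $\Dict[i'..] = \alpha' x'$.

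The core step is to apply \cref{lem:bgk-2} to $\alpha$ and $\alpha'$. Since $\alpha \ne \alpha'$ are both in $\Suf$, \cref{lem:bgk-1} says neither is a proper prefix of the other. They therefore differ at some index $p < \min(|\alpha|, |\alpha'|)$ with $\alpha[0..p-1] = \alpha'[0..p-1]$. This gives both directions at once:
\begin{itemize}[nosep]
\item if $\alpha \lex \alpha'$, then $\alpha[p] < \alpha'[p]$, and \cref{lem:bgk-2} yields $\alpha x \lex \alpha' x'$;
\item conversely, $\Dict[i..]$ and $\Dict[i'..]$ first differ at index $p$, so their order equals the order of $\alpha[p]$ and $\alpha'[p]$, which is the order of $\alpha$ and $\alpha'$.
\end{itemize}
Since $\lex$ is total and $\alpha \ne \alpha'$, the ``if and only if'' follows.

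The main obstacle is that the padded-sequence order $\lex$ used in \cref{sec:memory}, with padding character $0$ and ties broken by length, must coincide with plain lexicographic order on $\Suf$. The separator $\texttt{0}$ and the $\$$ characters, which are also $0$, could in principle create spurious agreements. I would handle this by noting that the first difference index $p$ lies strictly inside both strings, so no padding and no separator is ever consulted. There is one residual case: $\alpha$ ending in $\$^w$, which contains zeros. Here I would argue that the characters $\alpha[0..p]$ and $\alpha'[0..p]$ are genuine characters of the strings themselves, so the comparison is still decided at $p$ regardless of their values.

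It also remains to check that every element of $\Suf$ really appears at such a position $i$. This holds because each element is generated by exactly one RL position (\cref{lem:interval}), and that phrase is included in $\Dict$. Distinct elements $\alpha \ne \alpha'$ then have distinct positions $i \ne i'$, so the statement is well defined.
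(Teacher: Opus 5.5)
Your proof is correct and follows the paper's argument. Prefix-freeness of $\Suf$ (\cref{lem:bgk-1}) forces the first mismatch of $\alpha$ and $\alpha'$ to lie strictly inside both strings, and since $\Dict[i..]$ and $\Dict[i'..]$ begin with $\alpha$ and $\alpha'$, both comparisons are decided by the same pair of characters.
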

\begin{proof}
Since $\Suf$ is prefix-free (\cref{lem:bgk-1}) and contains every distinct phrase suffix once, $\alpha$ and $\alpha'$ differ at a position within $\min(|\alpha|, |\alpha'|)$, where both orders compare the same characters.
\end{proof}

Hence, any suffix array construction for $\Dict$ yields the order of $\Suf$.
Round~0 uses 64-bit keys that pack $\kappa = \lfloor 64 / \nu \rfloor$ characters of $\nu$ bits, where $\nu$ is the width of the largest character of $\Dict$.%
It computes the ranks after the first $\kappa$ characters.%
The rank of a run of equal keys is its first index.
The positions in runs of size at least two form the active list.
In the doubling rounds $h = \kappa, 2\kappa, 4\kappa, \dots$, the active positions are sorted by the 64-bit key $(\rank[i], \rank[i + h])$, with rank $0$ past the end.
Within a group of equal old rank, every run of equal keys receives the rank ``old rank $+$ run start $-$ group start''.
Runs of size one are resolved before the ranks are written back.

\section{Experimental Evaluation}\label{sec:eval}
In this section, we compare our implementation the improvements presented in \cref{sec:engineering} with the current state-of-the-art GPU-based BWT construction algorithm \libcubwt.
We denote the baseline implementation as \emph{op0}.
The configurations \emph{opt1} to \emph{opt4} add the allocation hygiene, finish small groups early, recomput instead of cache, and use MSD bucketing, respectively, see \cref{sec:memory}.
Configurations \emph{opt5} and \emph{opt6} use the in-place bucketed output without and with presorting by two characters (\cref{sec:ref-level56}).
When we use the more sophisticated MSD radix sort (\cref{sec:ref-msd}) with opt5 and opt5, this is denoted by \emph{cub5} and \emph{cub6}, resp.
We use \emph{fast} to denote a version of cub6 that sacrifices some space savings for more performance.
The space-efficient configuration (\cref{sec:ref-space}) is denoted by \emph{space} and the prefix doubling approach (\cref{sec:ref-doubling}) for the suffix sorting by \emph{dbl}.

\subparagraph*{Setup.}
All experiments run on an NVIDIA GeForce RTX 3090 under CUDA 12.6. 
The reported times cover the whole construction from host input to host output.
Every reported time is the median of three runs and validated against a main memory BWT construction algorithm~\cite{libsais}.
The inputs are the texts of the Pizza\&Chili corpus~\cite{PizzaChili}, see \cref{tab:texts} for more details.
To find the largest input that a configuration handles, we run it on prefixes of every input, from $64$\,MiB in steps of $25\%$ up to the whole file.

\begin{table}[t]
\caption{The inputs with their size, alphabet size $\sigma$, number of BWT runs
$r$, and $n / r$.  The left part lists the non-repetitive texts and the right
part the repetitive ones.}\label{tab:texts}
\centering\small
\setlength{\tabcolsep}{3.5pt}
\begin{tabular}{@{}lrrrr@{\hspace{12pt}}lrrrr@{}}
\toprule
text & MiB & $\sigma$ & $r$ & $n / r$ & text & MiB & $\sigma$ & $r$ & $n / r$ \\
\midrule
\texttt{dblp.xml} & 282.4 & 97 & 41\,037\,553 & 7.2 & \texttt{cere} & 439.9 & 5 & 11\,574\,639 & 39.9 \\
\texttt{dna} & 385.2 & 16 & 243\,492\,866 & 1.7 & \texttt{coreutils} & 195.8 & 236 & 4\,684\,458 & 43.8 \\
\texttt{english} & 2\,108.0 & 239 & 658\,301\,008 & 3.4 & \texttt{einstein.de.txt} & 88.5 & 117 & 101\,369 & 915.1 \\
\texttt{pitches} & 53.2 & 133 & 23\,429\,976 & 2.4 & \texttt{einstein.en.txt} & 446.0 & 139 & 290\,237 & 1\,611.2 \\
\texttt{proteins} & 1\,129.2 & 27 & 441\,858\,468 & 2.7 & \texttt{Escherichia\_Coli} & 107.5 & 15 & 15\,044\,485 & 7.5 \\
\texttt{sources} & 201.1 & 230 & 47\,896\,806 & 4.4 & \texttt{influenza} & 147.6 & 15 & 3\,022\,820 & 51.2 \\
& & & & & \texttt{kernel} & 246.0 & 160 & 2\,791\,366 & 92.4 \\
& & & & & \texttt{para} & 409.4 & 5 & 15\,636\,738 & 27.4 \\
& & & & & \texttt{world\_leaders} & 44.8 & 89 & 573\,485 & 81.9 \\
\bottomrule
\end{tabular}
\end{table}

\subparagraph*{Largest inputs.}
Our our best configurations process all 15 inputs completely.
\libcubwt process $14$ of the $15$ inputs and failed on \texttt{english}.
Here, \libcubwt handles prefixes of up to $1\,164$\,MiB, whereas we  process the whole $2\,108$\,MiB, i.e., $1.8$ times as much.
\libcubwt needs $20.4$ bytes per character on every input and size, \code{cub6} $6.1$ to $17.0$, and \code{space} $5.9$ to $14.6$ at the largest size they handle.

\begin{figure}[tbp]
\centering
\includegraphics[width=0.96\textwidth]{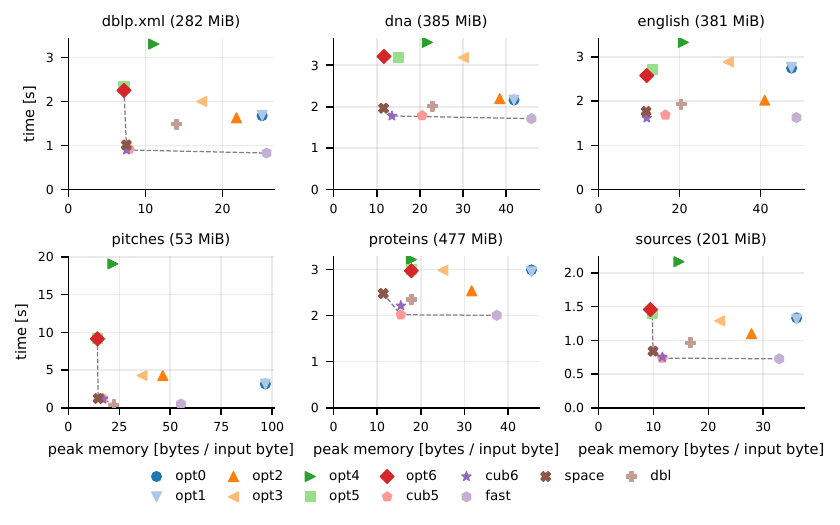}\\[1ex]
\includegraphics[width=0.96\textwidth]{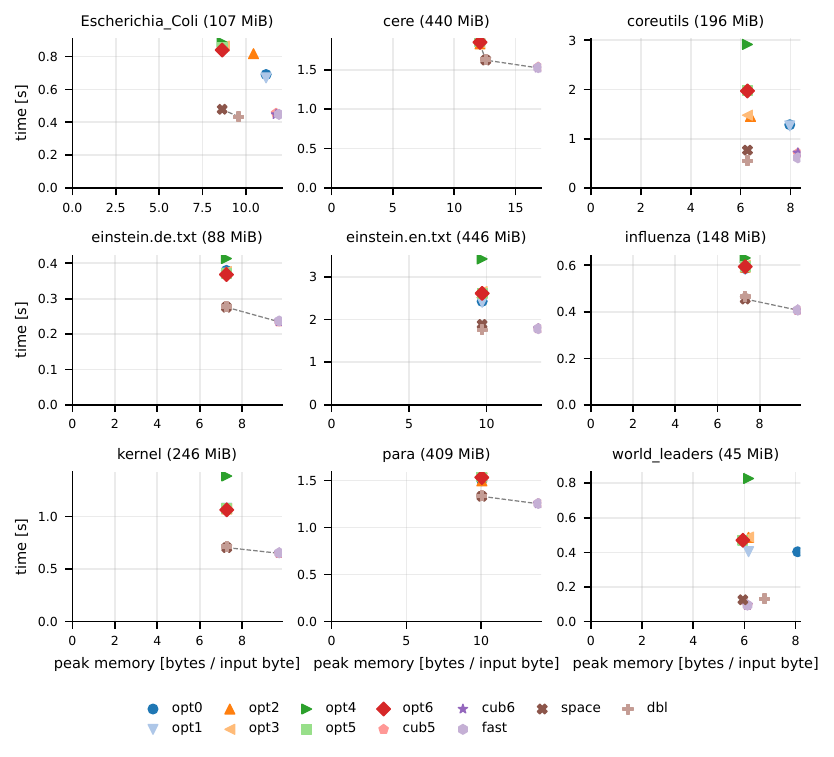}
\caption{Pareto front (dashed) of peak memory per input character (x-axis) and time (y-axis) of all configurations at the largest size that all of them handle.}\label{fig:spacetime-all}
\end{figure}

\subparagraph*{Comparison with \libcubwt.}
\Cref{tab:libcubwt} compares \code{fast}, \code{cub6}, \code{space}, and \code{dbl} with \libcubwt at the largest size that all five handle.
The space-time trade-off is visualized in \cref{fig:spacetime-all}.
\Cref{fig:throughput,fig:spacetime-selected} show the throughput over all sizes and the space-time points.
On the non-repetitive inputs, \libcubwt runs at $1.4$ to $1.5$\,ns per character and is $2.0$ to $3.2$ times faster than \code{cub6} ($13.7$ times on \texttt{pitches}).
While \code{fast} performs better, it is still a factor of $1.8$ to $3.1$ ($5.8$ on \texttt{pitches}) slower than \libcubwt.
Additionally, it needs more space than \libcubwt.
However, on the repetitive inputs, \libcubwt slows down to $2.2$ to $10.0$\,ns per character.
Here, \code{cub6} is faster than \libcubwt on five of the nine inputs (by a factor of up to $5$ on \texttt{world\_leaders}), on par on three (ratios $0.98$ to $1.04$), and slower only on \texttt{Escherichia\_Coli} ($1.76$).
Interestingly, the latter is the repetitive input with the largest $s$ ($0.10n$) and the smallest $n / r$ ratio.
Here, also \code{space} is faster than \libcubwt on the same five inputs, at $5.9$ to $12.6$ bytes per character on all nine.
Overall, \libcubwt thus remains the better choice for non-repetitive texts that fit into its memory.
Our construction is preferable for repetitive and for larger texts.

\begin{table}[t]
\caption{Time relative to \libcubwt ($t/t_{\mathrm{cu}}$, lower is better) and
peak memory in bytes per input character (B/n, lower is better) at the
largest size all five handle.
Here $t_{\mathrm{cu}}$ is the time of \libcubwt, which needs $20.4$ bytes per character throughout.}\label{tab:libcubwt}
\centering\small
\setlength{\tabcolsep}{4pt}
\begin{tabular}{@{}lrrrrrrrrrr@{}}
\toprule
& & & \multicolumn{2}{c}{\code{fast}} & \multicolumn{2}{c}{\code{cub6}}
& \multicolumn{2}{c}{\code{space}} & \multicolumn{2}{c}{\code{dbl}} \\
\cmidrule(lr){4-5}\cmidrule(lr){6-7}\cmidrule(lr){8-9}\cmidrule(l){10-11}
text & MiB & $t_{\mathrm{cu}}$ [s] & $t/t_{\mathrm{cu}}$ & B/n
& $t/t_{\mathrm{cu}}$ & B/n & $t/t_{\mathrm{cu}}$ & B/n & $t/t_{\mathrm{cu}}$ & B/n \\
\midrule
\texttt{dblp.xml} & 282.4 & 0.45 & 1.84 & 25.7 & 1.99 & 7.5 & 2.25 & 7.5 & 3.29 & 14.1 \\
\texttt{dna} & 385.2 & 0.55 & 3.10 & 45.8 & 3.22 & 13.5 & 3.56 & 11.6 & 3.65 & 22.9 \\
\texttt{english} & 381.5 & 0.59 & 2.76 & 48.8 & 2.74 & 11.9 & 2.99 & 11.7 & 3.28 & 20.4 \\
\texttt{pitches} & 53.2 & 0.08 & 5.83 & 55.2 & 13.73 & 17.0 & 14.68 & 14.6 & 4.37 & 22.3 \\
\texttt{proteins} & 596.0 & 0.91 & 2.68 & 35.7 & 2.97 & 15.4 & 3.31 & 11.1 & 3.16 & 17.2 \\
\texttt{sources} & 201.1 & 0.30 & 2.39 & 32.9 & 2.47 & 11.6 & 2.76 & 9.9 & 3.17 & 16.7 \\
\midrule
\texttt{cere} & 439.9 & 3.12 & 0.49 & 16.8 & 0.49 & 16.8 & 0.52 & 12.6 & 0.52 & 12.6 \\
\texttt{coreutils} & 195.8 & 0.68 & 0.90 & 8.3 & 1.04 & 8.3 & 1.12 & 6.3 & 0.80 & 6.3 \\
\texttt{einstein.de.txt} & 88.5 & 0.37 & 0.65 & 9.7 & 0.64 & 9.7 & 0.75 & 7.3 & 0.76 & 7.3 \\
\texttt{einstein.en.txt} & 446.0 & 1.83 & 0.98 & 13.3 & 0.98 & 13.3 & 1.03 & 9.7 & 0.96 & 9.7 \\
\texttt{Escherichia\_Coli} & 107.5 & 0.25 & 1.76 & 11.9 & 1.76 & 11.7 & 1.89 & 8.6 & 1.71 & 9.6 \\
\texttt{influenza} & 147.6 & 0.39 & 1.04 & 9.8 & 1.04 & 9.8 & 1.16 & 7.3 & 1.19 & 7.3 \\
\texttt{kernel} & 246.0 & 0.86 & 0.77 & 9.7 & 0.77 & 9.7 & 0.82 & 7.3 & 0.84 & 7.3 \\
\texttt{para} & 409.4 & 1.80 & 0.70 & 13.8 & 0.69 & 13.8 & 0.74 & 10.1 & 0.74 & 10.1 \\
\texttt{world\_leaders} & 44.8 & 0.47 & 0.20 & 6.1 & 0.20 & 6.1 & 0.27 & 5.9 & 0.28 & 6.8 \\
\bottomrule
\end{tabular}
\end{table}

\begin{figure}[tbp]
\centering
\includegraphics[width=0.95\textwidth]{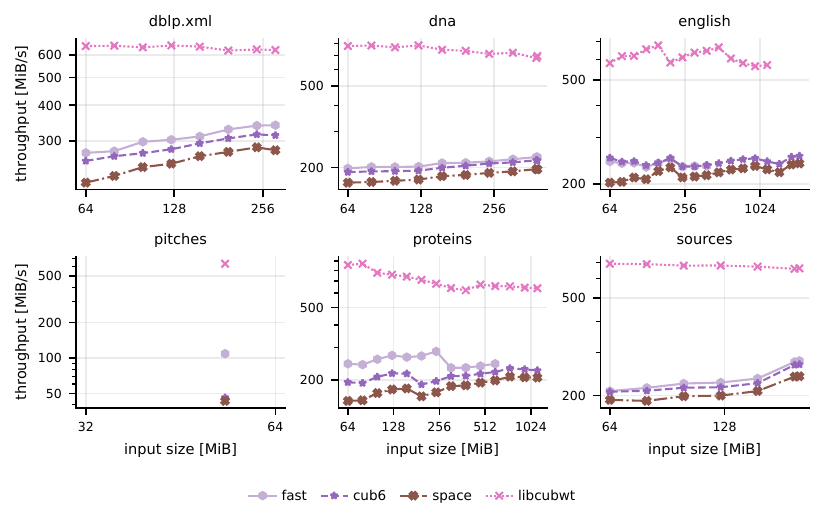}\\[1ex]
\includegraphics[width=0.95\textwidth]{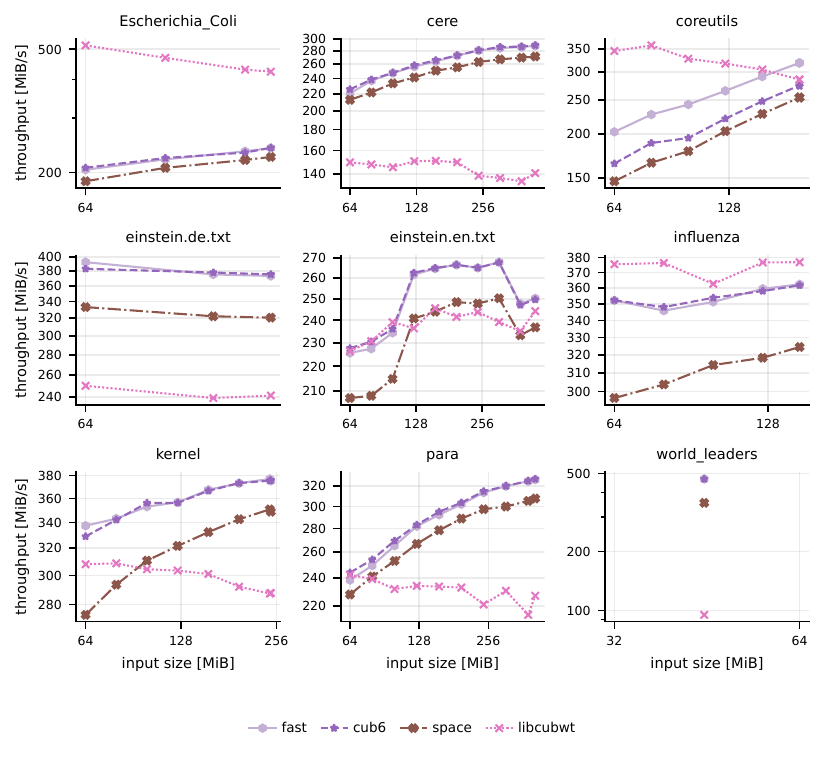}
\caption{Throughput over input size of \code{fast}, \code{cub6}, \code{space}, \code{dbl}, configurations and \libcubwt.}\label{fig:throughput}
\end{figure}

\begin{figure}[tbp]
\centering
\includegraphics[width=0.97\textwidth]{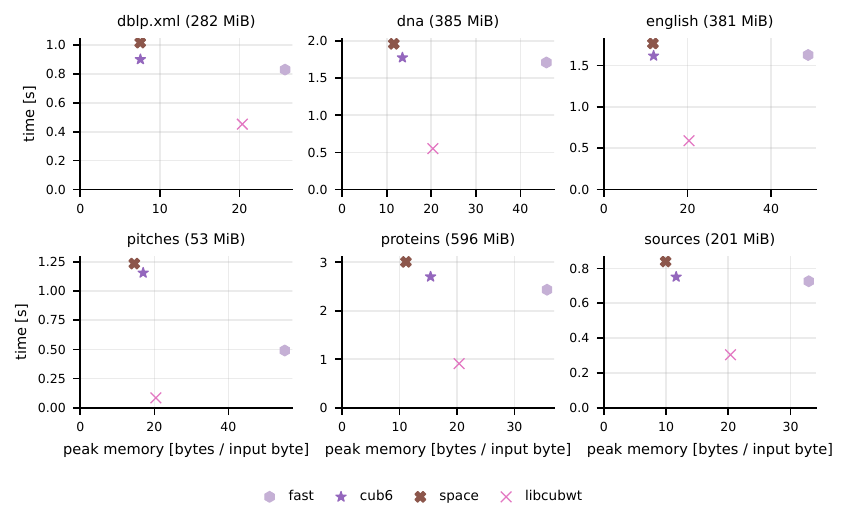}\\[1ex]
\includegraphics[width=0.97\textwidth]{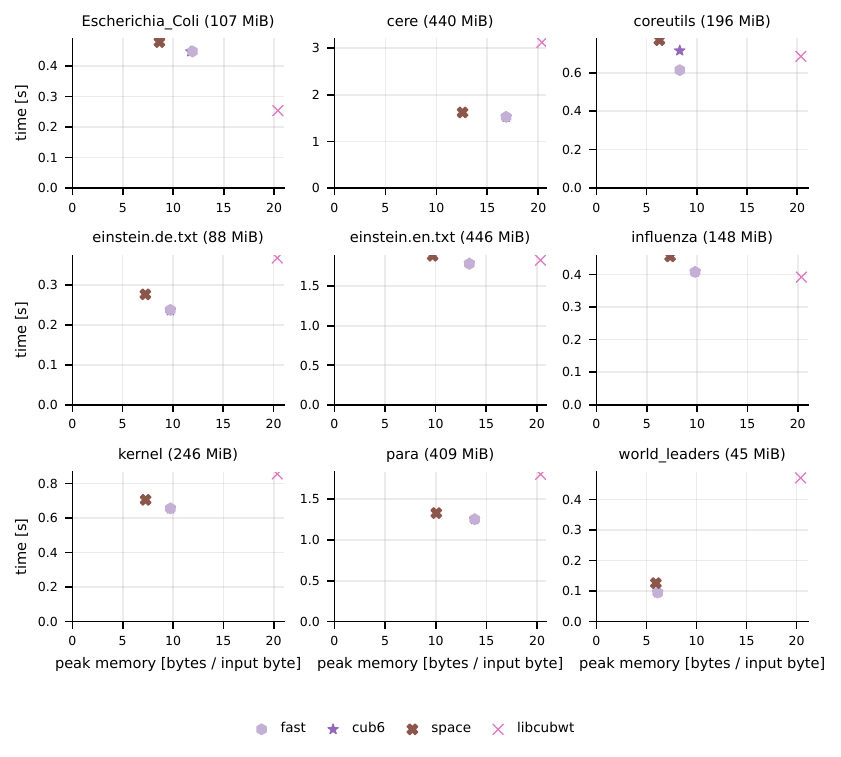}
\caption{Peak memory per input character and time of \code{fast}, \code{cub6}, \code{space}, \code{dbl},
configurations and \libcubwt at the largest size that all of them handle.}\label{fig:spacetime-selected}
\end{figure}

\subparagraph*{Where the time goes.}
On the non-repetitive inputs, the suffix sort is the most expensive phase of \code{cub6}, with $39$ to $88\%$ of the time.
On \texttt{english} and \texttt{dna}, its rounds are work-bound, e.g., $571$ rounds of $8.4$\,ms on average on the whole \texttt{english}.
On \texttt{pitches}, it needs $5\,485$ rounds of $0.19$\,ms on average, dominated by the fixed part.
Hence, \code{fast}, with a single chunk and $737$ rounds, and \code{dbl}, with $10$ rounds ($5$ to $10$ on all inputs), are $2.4$ and $3.1$ times faster than \code{cub6} there.
On \texttt{english}, however, \code{dbl} is $1.2$ times slower than \code{cub6} at every input size.
On the repetitive inputs, the suffix sort takes only $2$ to $42\%$ of the time.
Still, the dictionary phases take $53$ to $81\%$.
The parse-rank phase, i.e., the prefix doubling of $\SA(\Parse)$, alone takes $38$ to $45\%$ on \texttt{cere}, \texttt{einstein.de.txt}, \texttt{einstein.en.txt}, \texttt{kernel}, and \texttt{para}.
This is due to all $d = 0.03n$ to $0.15n$ phrase occurrences that have to be kept instead of the far fewer distinct phrases.

\section{Conclusion}\label{sec:conclusion}
We gave a variant of the prefix-free-parsing-based Burrows--Wheeler transform construction algorithm that works well on the GPU.
Additionally, we showed how to make this simple algorithm more memory efficient and engineer its performance using different sorting routine.
Overall, we reduce the peak on \texttt{english} by a factor of four, so that a $24$\,GB GPU processes a $1.8$ times larger prefix than with \libcubwt.
Unfortunately, this only works well for repetitive inputs.
Here, our construction is faster than or on par with \libcubwt on eight of nine inputs.
On non-repetitive inputs, we are significantly slower, because the suffix sort is the bottleneck.
On repetitive texts, the dictionary phases dominate, because they process all $d$ phrase occurrences.
Working on the $|\Dset|$ distinct phrases plus a compact occurrence structure would improve this on these texts.

\bibliography{lit.dblp}

\begin{thebibliography}{100}

\bibitem{AdlerBHS25}
Enno Adler, Stefan B{\"{o}}ttcher, Rita Hartel, and Cederic~Alexander
  Steininger.
\newblock {IBB}: Fast burrows-wheeler transform construction for length-diverse
  {DNA} data.
\newblock In {\em {SEA}}, volume 338 of {\em LIPIcs}, pages 2:1--2:18. Schloss
  Dagstuhl - Leibniz-Zentrum f{\"{u}}r Informatik, 2025.
\newblock \href {https://doi.org/10.4230/LIPIcs.SEA.2025.2}
  {\path{doi:10.4230/LIPIcs.SEA.2025.2}}.

\bibitem{SACABench19}
Johannes Bahne, Nico Bertram, Marvin B{\"{o}}cker, Jonas Bode, Johannes
  Fischer, Hermann Foot, Florian Grieskamp, Florian Kurpicz, Marvin
  L{\"{o}}bel, Oliver Magiera, Rosa Pink, David Piper, and Christopher Poeplau.
\newblock {SACABench}: Benchmarking suffix array construction.
\newblock In {\em {SPIRE}}, volume 11811 of {\em Lecture Notes in Computer
  Science}, pages 407--416. Springer, 2019.
\newblock \href {https://doi.org/10.1007/978-3-030-32686-9\_29}
  {\path{doi:10.1007/978-3-030-32686-9\_29}}.

\bibitem{Baier16}
Uwe Baier.
\newblock Linear-time suffix sorting - a new approach for suffix array
  construction.
\newblock In {\em {CPM}}, volume~54 of {\em LIPIcs}, pages 23:1--23:12. Schloss
  Dagstuhl - Leibniz-Zentrum f{\"{u}}r Informatik, 2016.
\newblock \href {https://doi.org/10.4230/LIPIcs.CPM.2016.23}
  {\path{doi:10.4230/LIPIcs.CPM.2016.23}}.

\bibitem{BannaiKKP24}
Hideo Bannai, Juha K{\"{a}}rkk{\"{a}}inen, Dominik K{\"{o}}ppl, and Marcin
  Piatkowski.
\newblock Constructing and indexing the bijective and extended burrows-wheeler
  transform.
\newblock {\em Inf. Comput.}, 297:105153, 2024.
\newblock \href {https://doi.org/10.1016/j.ic.2024.105153}
  {\path{doi:10.1016/j.ic.2024.105153}}.

\bibitem{BauerCR13}
Markus~J. Bauer, Anthony~J. Cox, and Giovanna Rosone.
\newblock Lightweight algorithms for constructing and inverting the {BWT} of
  string collections.
\newblock {\em Theor. Comput. Sci.}, 483:134--148, 2013.
\newblock \href {https://doi.org/10.1016/j.tcs.2012.02.002}
  {\path{doi:10.1016/j.tcs.2012.02.002}}.

\bibitem{Belazzougui14}
Djamal Belazzougui.
\newblock Linear time construction of compressed text indices in compact space.
\newblock In {\em {STOC}}, pages 148--193. ACM, 2014.
\newblock \href {https://doi.org/10.1145/2591796.2591885}
  {\path{doi:10.1145/2591796.2591885}}.

\bibitem{BellH12}
Nathan Bell and Jared Hoberock.
\newblock Thrust: A productivity-oriented library for {CUDA}.
\newblock In Wen mei W.~Hwu, editor, {\em {GPU} Computing Gems Jade Edition},
  pages 359--371. Morgan Kaufmann, 2012.
\newblock \href {https://doi.org/10.1016/B978-0-12-385963-1.00026-5}
  {\path{doi:10.1016/B978-0-12-385963-1.00026-5}}.

\bibitem{BellerZGO13}
Timo Beller, Maike Zwerger, Simon Gog, and Enno Ohlebusch.
\newblock Space-efficient construction of the burrows-wheeler transform.
\newblock In {\em {SPIRE}}, volume 8214 of {\em Lecture Notes in Computer
  Science}, pages 5--16. Springer, 2013.
\newblock \href {https://doi.org/10.1007/978-3-319-02432-5\_5}
  {\path{doi:10.1007/978-3-319-02432-5\_5}}.

\bibitem{BertramEF21}
Nico Bertram, Jonas Ellert, and Johannes Fischer.
\newblock Lyndon words accelerate suffix sorting.
\newblock In {\em {ESA}}, volume 204 of {\em LIPIcs}, pages 15:1--15:13.
  Schloss Dagstuhl - Leibniz-Zentrum f{\"{u}}r Informatik, 2021.
\newblock \href {https://doi.org/10.4230/LIPIcs.ESA.2021.15}
  {\path{doi:10.4230/LIPIcs.ESA.2021.15}}.

\bibitem{BezKLS23}
Dominik Bez, Florian Kurpicz, Hans{-}Peter Lehmann, and Peter Sanders.
\newblock High performance construction of {RecSplit} based minimal perfect
  hash functions.
\newblock In {\em {ESA}}, volume 274 of {\em LIPIcs}, pages 19:1--19:16.
  Schloss Dagstuhl - Leibniz-Zentrum f{\"{u}}r Informatik, 2023.
\newblock \href {https://doi.org/10.4230/LIPIcs.ESA.2023.19}
  {\path{doi:10.4230/LIPIcs.ESA.2023.19}}.

\bibitem{BingmannDFKOS22}
Timo Bingmann, Patrick Dinklage, Johannes Fischer, Florian Kurpicz, Enno
  Ohlebusch, and Peter Sanders.
\newblock Scalable text index construction.
\newblock In {\em {Algorithms for Big Data}}, volume 13201 of {\em Lecture
  Notes in Computer Science}, pages 252--284. Springer, 2022.
\newblock \href {https://doi.org/10.1007/978-3-031-21534-6\_14}
  {\path{doi:10.1007/978-3-031-21534-6\_14}}.

\bibitem{BingmannFO16}
Timo Bingmann, Johannes Fischer, and Vitaly Osipov.
\newblock Inducing suffix and lcp arrays in external memory.
\newblock In {\em {ALENEX}}, pages 88--102. SIAM, 2013.
\newblock \href {https://doi.org/10.1137/1.9781611972931.8}
  {\path{doi:10.1137/1.9781611972931.8}}.

\bibitem{BingmannGK18}
Timo Bingmann, Simon Gog, and Florian Kurpicz.
\newblock Scalable construction of text indexes with thrill.
\newblock In {\em {IEEE BigData}}, pages 634--643. IEEE, 2018.
\newblock \href {https://doi.org/10.1109/BigData.2018.8622171}
  {\path{doi:10.1109/BigData.2018.8622171}}.

\bibitem{BonizzoniVPPR21}
Paola Bonizzoni, Gianluca~Della Vedova, Yuri Pirola, Marco Previtali, and
  Raffaella Rizzi.
\newblock Computing the multi-string {BWT} and {LCP} array in external memory.
\newblock {\em Theor. Comput. Sci.}, 862:42--58, 2021.
\newblock \href {https://doi.org/10.1016/j.tcs.2020.11.041}
  {\path{doi:10.1016/j.tcs.2020.11.041}}.

\bibitem{BonomoMRRS14}
Silvia Bonomo, Sabrina Mantaci, Antonio Restivo, Giovanna Rosone, and Marinella
  Sciortino.
\newblock Sorting conjugates and suffixes of words in a multiset.
\newblock {\em Int. J. Found. Comput. Sci.}, 25(8):1161--, 2014.
\newblock \href {https://doi.org/10.1142/S0129054114400309}
  {\path{doi:10.1142/S0129054114400309}}.

\bibitem{BoucherGKLMM19}
Christina Boucher, Travis Gagie, Alan Kuhnle, Ben Langmead, Giovanni Manzini,
  and Taher Mun.
\newblock Prefix-free parsing for building big {BWTs.}
\newblock {\em Algorithms Mol. Biol.}, 14(1):13:1--13:15, 2019.
\newblock \href {https://doi.org/10.1186/s13015-019-0148-5}
  {\path{doi:10.1186/s13015-019-0148-5}}.

\bibitem{BurenJKHS19}
Florian B{\"{u}}ren, Daniel J{\"{u}}nger, Robin Kobus, Christian Hundt, and
  Bertil Schmidt.
\newblock Suffix array construction on multi-{GPU} systems.
\newblock In {\em {HPDC}}, pages 183--194. ACM, 2019.
\newblock \href {https://doi.org/10.1145/3307681.3325961}
  {\path{doi:10.1145/3307681.3325961}}.

\bibitem{BurrowsW94}
Michael Burrows and David~J. Wheeler.
\newblock A block-sorting lossless data compression algorithm.
\newblock Technical report, 1994.

\bibitem{CoxGRS16}
Anthony~J. Cox, Fabio Garofalo, Giovanna Rosone, and Marinella Sciortino.
\newblock Lightweight {LCP} construction for very large collections of strings.
\newblock {\em J. Discrete Algorithms}, 37:17--33, 2016.
\newblock \href {https://doi.org/10.1016/j.jda.2016.03.003}
  {\path{doi:10.1016/j.jda.2016.03.003}}.

\bibitem{CrauserF02}
Andreas Crauser and Paolo Ferragina.
\newblock A theoretical and experimental study on the construction of suffix
  arrays in external memory.
\newblock {\em Algorithmica}, 32(1):1--35, 2002.
\newblock \href {https://doi.org/10.1007/s00453-001-0051-5}
  {\path{doi:10.1007/s00453-001-0051-5}}.

\bibitem{CrochemoreGKL15}
Maxime Crochemore, Roberto Grossi, Juha K{\"{a}}rkk{\"{a}}inen, and Gad~M.
  Landau.
\newblock Computing the burrows-wheeler transform in place and in small space.
\newblock {\em J. Discrete Algorithms}, 32:44--52, 2015.
\newblock \href {https://doi.org/10.1016/j.jda.2015.01.004}
  {\path{doi:10.1016/j.jda.2015.01.004}}.

\bibitem{DementievKMS08}
Roman Dementiev, Juha K{\"{a}}rkk{\"{a}}inen, Jens Mehnert, and Peter Sanders.
\newblock Better external memory suffix array construction.
\newblock {\em ACM J. Exp. Algorithmics}, 12:3.4:1--3.4:24, 2008.
\newblock \href {https://doi.org/10.1145/1227161.1402296}
  {\path{doi:10.1145/1227161.1402296}}.

\bibitem{DeoK13}
Mrinal Deo and Sean Keely.
\newblock Parallel suffix array and least common prefix for the {GPU.}
\newblock In {\em {PPoPP}}, pages 197--206. ACM, 2013.
\newblock \href {https://doi.org/10.1145/2442516.2442536}
  {\path{doi:10.1145/2442516.2442536}}.

\bibitem{DeshpandeN15}
Aditya Deshpande and P.~J. Narayanan.
\newblock Fast burrows wheeler compression using all-cores.
\newblock In {\em {IPDPS Workshops}}, pages 628--636. IEEE Computer Society,
  2015.
\newblock \href {https://doi.org/10.1109/IPDPSW.2015.53}
  {\path{doi:10.1109/IPDPSW.2015.53}}.

\bibitem{Diaz-DominguezN23}
Diego D{\'{\i}}az{-}Dom{\'{\i}}nguez and Gonzalo Navarro.
\newblock Efficient construction of the {BWT} for repetitive text using string
  compression.
\newblock {\em Inf. Comput.}, 294:105088, 2023.
\newblock \href {https://doi.org/10.1016/j.ic.2023.105088}
  {\path{doi:10.1016/j.ic.2023.105088}}.

\bibitem{EdwardsV14}
James~Alexander Edwards and Uzi Vishkin.
\newblock Parallel algorithms for burrows-wheeler compression and
  decompression.
\newblock {\em Theor. Comput. Sci.}, 525:10--22, 2014.
\newblock \href {https://doi.org/10.1016/j.tcs.2013.10.009}
  {\path{doi:10.1016/j.tcs.2013.10.009}}.

\bibitem{EgidiLMT19}
Lavinia Egidi, Felipe~A. Louza, Giovanni Manzini, and Guilherme~P. Telles.
\newblock External memory {BWT} and {LCP} computation for sequence collections
  with applications.
\newblock {\em Algorithms Mol. Biol.}, 14(1):6:1--6:15, 2019.
\newblock \href {https://doi.org/10.1186/s13015-019-0140-0}
  {\path{doi:10.1186/s13015-019-0140-0}}.

\bibitem{EgidiM20}
Lavinia Egidi and Giovanni Manzini.
\newblock Lightweight merging of compressed indices based on {BWT} variants.
\newblock {\em Theor. Comput. Sci.}, 812:214--229, 2020.
\newblock \href {https://doi.org/10.1016/j.tcs.2019.11.001}
  {\path{doi:10.1016/j.tcs.2019.11.001}}.

\bibitem{FerraginaGM12}
Paolo Ferragina, Travis Gagie, and Giovanni Manzini.
\newblock Lightweight data indexing and compression in external memory.
\newblock {\em Algorithmica}, 63(3):707--730, 2012.
\newblock \href {https://doi.org/10.1007/s00453-011-9535-0}
  {\path{doi:10.1007/s00453-011-9535-0}}.

\bibitem{FerraginaM00}
Paolo Ferragina and Giovanni Manzini.
\newblock Opportunistic data structures with applications.
\newblock In {\em {FOCS}}, pages 390--398. IEEE Computer Society, 2000.
\newblock \href {https://doi.org/10.1109/SFCS.2000.892127}
  {\path{doi:10.1109/SFCS.2000.892127}}.

\bibitem{PizzaChili}
Paolo Ferragina and Gonzalo Navarro.
\newblock Pizza\&chili corpus.
\newblock \url{http://pizzachili.dcc.uchile.cl/}, 2005.

\bibitem{FerroPGLB26}
Eddie Ferro, Tyler Pencinger, Oded Green, Mahsa Lotfollahi, and Christina
  Boucher.
\newblock {Rapid-PFP}: Accelerating prefix-free parsing with {GPU} parallelism.
\newblock bioRxiv preprint, 2026.
\newblock \href {https://doi.org/10.64898/2026.04.28.721411}
  {\path{doi:10.64898/2026.04.28.721411}}.

\bibitem{FischerK17}
Johannes Fischer and Florian Kurpicz.
\newblock Dismantling {DivSufSort.}
\newblock In {\em {PSC}}, pages 62--76. Department of Theoretical Computer
  Science, Faculty of Information Technology, Czech Technical University in
  Prague, 2017.

\bibitem{FischerK19}
Johannes Fischer and Florian Kurpicz.
\newblock Lightweight distributed suffix array construction.
\newblock In {\em {ALENEX}}, pages 27--38. SIAM, 2019.
\newblock \href {https://doi.org/10.1137/1.9781611975499.3}
  {\path{doi:10.1137/1.9781611975499.3}}.

\bibitem{FlickA15}
Patrick Flick and Srinivas Aluru.
\newblock Parallel distributed memory construction of suffix and longest common
  prefix arrays.
\newblock In {\em {SC}}, pages 16:1--16:10. ACM, 2015.
\newblock \href {https://doi.org/10.1145/2807591.2807609}
  {\path{doi:10.1145/2807591.2807609}}.

\bibitem{FranzrebBR25}
Marco Franzreb, Martin Burtscher, and Stephan Rudolph.
\newblock {GPU} implementation of the wavelet tree.
\newblock {\em CoRR}, abs/2505.03372, 2025.
\newblock \href {https://doi.org/10.48550/arXiv.2505.03372}
  {\path{doi:10.48550/arXiv.2505.03372}}.

\bibitem{Fuentes-SepulvedaNN19}
Jos{\'{e}} Fuentes{-}Sep{\'{u}}lveda, Gonzalo Navarro, and Yakov Nekrich.
\newblock Space-efficient computation of the burrows-wheeler transform.
\newblock In {\em {DCC}}, pages 132--141. IEEE, 2019.
\newblock \href {https://doi.org/10.1109/DCC.2019.00021}
  {\path{doi:10.1109/DCC.2019.00021}}.

\bibitem{Fuentes-SepulvedaNN20}
Jos{\'{e}} Fuentes{-}Sep{\'{u}}lveda, Gonzalo Navarro, and Yakov Nekrich.
\newblock Parallel computation of the burrows wheeler transform in compact
  space.
\newblock {\em Theor. Comput. Sci.}, 812:123--136, 2020.
\newblock \href {https://doi.org/10.1016/j.tcs.2019.09.030}
  {\path{doi:10.1016/j.tcs.2019.09.030}}.

\bibitem{GagieNP20}
Travis Gagie, Gonzalo Navarro, and Nicola Prezza.
\newblock Fully functional suffix trees and optimal text searching in
  {BWT}-runs bounded space.
\newblock {\em J. ACM}, 67(1):2:1--2:54, 2020.
\newblock \href {https://doi.org/10.1145/3375890} {\path{doi:10.1145/3375890}}.

\bibitem{GalluzzoGRR26}
Ylenia Galluzzo, Raffaele Giancarlo, Mario Randazzo, and Simona~E. Rombo.
\newblock Burrows wheeler transform on a large scale: Algorithms implemented in
  apache spark.
\newblock {\em Data}, 11(3), 2026.
\newblock \href {https://doi.org/10.3390/data11030048}
  {\path{doi:10.3390/data11030048}}.

\bibitem{libsais}
Ilya Grebnov.
\newblock libsais: suffix array and {BWT} construction based on induced
  sorting.
\newblock \url{https://github.com/IlyaGrebnov/libsais}, 2021.
\newblock Version 2.10.4.

\bibitem{libcubwt}
Ilya Grebnov.
\newblock libcubwt: {GPU}-accelerated suffix array and {BWT} construction.
\newblock \url{https://github.com/IlyaGrebnov/libcubwt}, 2023.
\newblock Version 1.6.3.

\bibitem{HaagKSS25}
Manuel Haag, Florian Kurpicz, Peter Sanders, and Matthias Schimek.
\newblock Fast and lightweight distributed suffix array construction.
\newblock In {\em {ESA}}, volume 351 of {\em LIPIcs}, pages 47:1--47:18.
  Schloss Dagstuhl - Leibniz-Zentrum f{\"{u}}r Informatik, 2025.
\newblock \href {https://doi.org/10.4230/LIPIcs.ESA.2025.47}
  {\path{doi:10.4230/LIPIcs.ESA.2025.47}}.

\bibitem{HayashiT13}
Shinya Hayashi and Kenjiro Taura.
\newblock Parallel and memory-efficient burrows-wheeler transform.
\newblock In {\em {IEEE BigData}}, pages 43--50. IEEE Computer Society, 2013.
\newblock \href {https://doi.org/10.1109/BigData.2013.6691757}
  {\path{doi:10.1109/BigData.2013.6691757}}.

\bibitem{HoltM14}
James Holt and Leonard McMillan.
\newblock Merging of multi-string {BWTs} with applications.
\newblock {\em Bioinform.}, 30(24):3524--3531, 2014.
\newblock \href {https://doi.org/10.1093/bioinformatics/btu584}
  {\path{doi:10.1093/bioinformatics/btu584}}.

\bibitem{HonLSSY07}
Wing{-}Kai Hon, Tak~Wah Lam, Kunihiko Sadakane, Wing{-}Kin Sung, and Siu{-}Ming
  Yiu.
\newblock A space and time efficient algorithm for constructing compressed
  suffix arrays.
\newblock {\em Algorithmica}, 48(1):23--36, 2007.
\newblock \href {https://doi.org/10.1007/s00453-006-1228-8}
  {\path{doi:10.1007/s00453-006-1228-8}}.

\bibitem{HonSS09}
Wing{-}Kai Hon, Kunihiko Sadakane, and Wing{-}Kin Sung.
\newblock Breaking a time-and-space barrier in constructing full-text indices.
\newblock {\em SIAM J. Comput.}, 38(6):2162--2178, 2009.
\newblock \href {https://doi.org/10.1137/070685373}
  {\path{doi:10.1137/070685373}}.

\bibitem{Karkkainen07}
Juha K{\"{a}}rkk{\"{a}}inen.
\newblock Fast {BWT} in small space by blockwise suffix sorting.
\newblock {\em Theor. Comput. Sci.}, 387(3):249--257, 2007.
\newblock \href {https://doi.org/10.1016/j.tcs.2007.07.018}
  {\path{doi:10.1016/j.tcs.2007.07.018}}.

\bibitem{KarkkainenKP15}
Juha K{\"{a}}rkk{\"{a}}inen, Dominik Kempa, and Simon~J. Puglisi.
\newblock Parallel external memory suffix sorting.
\newblock In {\em {CPM}}, volume 9133 of {\em Lecture Notes in Computer
  Science}, pages 329--342. Springer, 2015.
\newblock \href {https://doi.org/10.1007/978-3-319-19929-0\_28}
  {\path{doi:10.1007/978-3-319-19929-0\_28}}.

\bibitem{KarkkainenKPZ17}
Juha K{\"{a}}rkk{\"{a}}inen, Dominik Kempa, Simon~J. Puglisi, and Bella
  Zhukova.
\newblock Engineering external memory induced suffix sorting.
\newblock In {\em {ALENEX}}, pages 98--108. SIAM, 2017.
\newblock \href {https://doi.org/10.1137/1.9781611974768.8}
  {\path{doi:10.1137/1.9781611974768.8}}.

\bibitem{KarkkainenSB06}
Juha K{\"{a}}rkk{\"{a}}inen, Peter Sanders, and Stefan Burkhardt.
\newblock Linear work suffix array construction.
\newblock {\em J. ACM}, 53(6):918--936, 2006.
\newblock \href {https://doi.org/10.1145/1217856.1217858}
  {\path{doi:10.1145/1217856.1217858}}.

\bibitem{KarpR87}
Richard~M. Karp and Michael~O. Rabin.
\newblock Efficient randomized pattern-matching algorithms.
\newblock {\em IBM J. Res. Dev.}, 31(2):249--260, 1987.
\newblock \href {https://doi.org/10.1147/rd.312.0249}
  {\path{doi:10.1147/rd.312.0249}}.

\bibitem{Kempa19}
Dominik Kempa.
\newblock Optimal construction of compressed indexes for highly repetitive
  texts.
\newblock In {\em {SODA}}, pages 1344--1357. SIAM, 2019.
\newblock \href {https://doi.org/10.1137/1.9781611975482.82}
  {\path{doi:10.1137/1.9781611975482.82}}.

\bibitem{KempaK19}
Dominik Kempa and Tomasz Kociumaka.
\newblock String synchronizing sets: sublinear-time {BWT} construction and
  optimal {LCE} data structure.
\newblock In {\em {STOC}}, pages 756--767. ACM, 2019.
\newblock \href {https://doi.org/10.1145/3313276.3316368}
  {\path{doi:10.1145/3313276.3316368}}.

\bibitem{KempaK20}
Dominik Kempa and Tomasz Kociumaka.
\newblock Resolution of the burrows-wheeler transform conjecture.
\newblock In {\em {FOCS}}, pages 1002--1013. IEEE, 2020.
\newblock \href {https://doi.org/10.1109/FOCS46700.2020.00097}
  {\path{doi:10.1109/FOCS46700.2020.00097}}.

\bibitem{KhanRMDP24}
Jamshed Khan, Tobias Rubel, Erin~K. Molloy, Laxman Dhulipala, and Rob Patro.
\newblock Fast, parallel, and cache-friendly suffix array construction.
\newblock {\em Algorithms Mol. Biol.}, 19(1):16, 2024.
\newblock \href {https://doi.org/10.1186/s13015-024-00263-5}
  {\path{doi:10.1186/s13015-024-00263-5}}.

\bibitem{KimSPP05}
Dong~Kyue Kim, Jeong~Seop Sim, Heejin Park, and Kunsoo Park.
\newblock Constructing suffix arrays in linear time.
\newblock {\em J. Discrete Algorithms}, 3(2-4):126--142, 2005.
\newblock \href {https://doi.org/10.1016/j.jda.2004.08.019}
  {\path{doi:10.1016/j.jda.2004.08.019}}.

\bibitem{KitajimaN99}
Joao~Paulo Kitajima and Gonzalo Navarro.
\newblock A fast distributed suffix array generation algorithm.
\newblock In {\em {SPIRE/CRIWG}}, pages 97--105. IEEE Computer Society, 1999.
\newblock \href {https://doi.org/10.1109/SPIRE.1999.796583}
  {\path{doi:10.1109/SPIRE.1999.796583}}.

\bibitem{KoA05}
Pang Ko and Srinivas Aluru.
\newblock Space efficient linear time construction of suffix arrays.
\newblock {\em J. Discrete Algorithms}, 3(2-4):143--156, 2005.
\newblock \href {https://doi.org/10.1016/j.jda.2004.08.002}
  {\path{doi:10.1016/j.jda.2004.08.002}}.

\bibitem{KopplHHS20}
Dominik K{\"{o}}ppl, Daiki Hashimoto, Diptarama Hendrian, and Ayumi Shinohara.
\newblock In-place bijective burrows-wheeler transforms.
\newblock In {\em {CPM}}, volume 161 of {\em LIPIcs}, pages 21:1--21:15.
  Schloss Dagstuhl - Leibniz-Zentrum f{\"{u}}r Informatik, 2020.
\newblock \href {https://doi.org/10.4230/LIPIcs.CPM.2020.21}
  {\path{doi:10.4230/LIPIcs.CPM.2020.21}}.

\bibitem{KuhnleMBGLM20}
Alan Kuhnle, Taher Mun, Christina Boucher, Travis Gagie, Ben Langmead, and
  Giovanni Manzini.
\newblock Efficient construction of a complete index for pan-genomics read
  alignment.
\newblock {\em J. Comput. Biol.}, 27(4):500--513, 2020.
\newblock \href {https://doi.org/10.1089/cmb.2019.0309}
  {\path{doi:10.1089/cmb.2019.0309}}.

\bibitem{KullaS07}
Fabian Kulla and Peter Sanders.
\newblock Scalable parallel suffix array construction.
\newblock {\em Parallel Comput.}, 33(9):605--612, 2007.
\newblock \href {https://doi.org/10.1016/j.parco.2007.06.004}
  {\path{doi:10.1016/j.parco.2007.06.004}}.

\bibitem{LabeitSB17}
Julian Labeit, Julian Shun, and Guy~E. Blelloch.
\newblock Parallel lightweight wavelet tree, suffix array and {FM}-index
  construction.
\newblock {\em J. Discrete Algorithms}, 43:2--17, 2017.
\newblock \href {https://doi.org/10.1016/j.jda.2017.04.001}
  {\path{doi:10.1016/j.jda.2017.04.001}}.

\bibitem{LangmeadTPS09}
Ben Langmead, Cole Trapnell, Mihai Pop, and Steven~L. Salzberg.
\newblock Ultrafast and memory-efficient alignment of short {DNA} sequences to
  the human genome.
\newblock {\em Genome Biology}, 10(3):R25, 2009.
\newblock \href {https://doi.org/10.1186/gb-2009-10-3-r25}
  {\path{doi:10.1186/gb-2009-10-3-r25}}.

\bibitem{LaoNCX18}
Bin Lao, Ge~Nong, Wai~Hong Chan, and Jing~Yi Xie.
\newblock Fast in-place suffix sorting on a multicore computer.
\newblock {\em IEEE Trans. Computers}, 67(12):1737--1749, 2018.
\newblock \href {https://doi.org/10.1109/TC.2018.2842050}
  {\path{doi:10.1109/TC.2018.2842050}}.

\bibitem{LarssonS07}
N.~Jesper Larsson and Kunihiko Sadakane.
\newblock Faster suffix sorting.
\newblock {\em Theor. Comput. Sci.}, 387(3):258--272, 2007.
\newblock \href {https://doi.org/10.1016/j.tcs.2007.07.017}
  {\path{doi:10.1016/j.tcs.2007.07.017}}.

\bibitem{Li24}
Heng Li.
\newblock {BWT} construction and search at the terabase scale.
\newblock {\em Bioinform.}, 40(12), 2024.
\newblock \href {https://doi.org/10.1093/bioinformatics/btae717}
  {\path{doi:10.1093/bioinformatics/btae717}}.

\bibitem{LiD09}
Heng Li and Richard Durbin.
\newblock Fast and accurate short read alignment with burrows-wheeler
  transform.
\newblock {\em Bioinform.}, 25(14):1754--1760, 2009.
\newblock \href {https://doi.org/10.1093/bioinformatics/btp324}
  {\path{doi:10.1093/bioinformatics/btp324}}.

\bibitem{LiYLLYKW09}
Ruiqiang Li, Chang Yu, Yingrui Li, Tak~Wah Lam, Siu{-}Ming Yiu, Karsten
  Kristiansen, and Jun Wang.
\newblock {SOAP2}: an improved ultrafast tool for short read alignment.
\newblock {\em Bioinform.}, 25(15):1966--1967, 2009.
\newblock \href {https://doi.org/10.1093/bioinformatics/btp336}
  {\path{doi:10.1093/bioinformatics/btp336}}.

\bibitem{LiLH22}
Zhize Li, Jian Li, and Hongwei Huo.
\newblock Optimal in-place suffix sorting.
\newblock {\em Inf. Comput.}, 285(Part):104818, 2022.
\newblock \href {https://doi.org/10.1016/j.ic.2021.104818}
  {\path{doi:10.1016/j.ic.2021.104818}}.

\bibitem{LiaoMZT15}
Gang Liao, Longfei Ma, Guangming Zang, and Lin Tang.
\newblock Parallel {DC3} algorithm for suffix array construction on many-core
  accelerators.
\newblock In {\em {CCGRID}}, pages 1155--1158. IEEE Computer Society, 2015.
\newblock \href {https://doi.org/10.1109/CCGrid.2015.56}
  {\path{doi:10.1109/CCGrid.2015.56}}.

\bibitem{LiuZW16}
Bo~Liu, Dixian Zhu, and Yadong Wang.
\newblock {deBWT}: parallel construction of burrows-wheeler transform for large
  collection of genomes with de bruijn-branch encoding.
\newblock {\em Bioinform.}, 32(12):174--182, 2016.
\newblock \href {https://doi.org/10.1093/bioinformatics/btw266}
  {\path{doi:10.1093/bioinformatics/btw266}}.

\bibitem{LiuLL14}
Chi{-}Man Liu, Ruibang Luo, and Tak~Wah Lam.
\newblock {GPU}-accelerated {BWT} construction for large collection of short
  reads.
\newblock {\em CoRR}, abs/1401.7457, 2014.
\newblock \href {https://doi.org/10.48550/arXiv.1401.7457}
  {\path{doi:10.48550/arXiv.1401.7457}}.

\bibitem{LiuHS16}
Yongchao Liu, Thomas Hankeln, and Bertil Schmidt.
\newblock Parallel and space-efficient construction of burrows-wheeler
  transform and suffix array for big genome data.
\newblock {\em IEEE ACM Trans. Comput. Biol. Bioinform.}, 13(3):592--598, 2016.
\newblock \href {https://doi.org/10.1109/TCBB.2015.2430314}
  {\path{doi:10.1109/TCBB.2015.2430314}}.

\bibitem{ManberM93}
Udi Manber and Eugene~W. Myers.
\newblock Suffix arrays: A new method for on-line string searches.
\newblock {\em SIAM J. Comput.}, 22(5):935--948, 1993.
\newblock \href {https://doi.org/10.1137/0222058} {\path{doi:10.1137/0222058}}.

\bibitem{MantaciRRS14}
Sabrina Mantaci, Antonio Restivo, Giovanna Rosone, and Marinella Sciortino.
\newblock Suffix array and lyndon factorization of a text.
\newblock {\em J. Discrete Algorithms}, 28:2--8, 2014.
\newblock \href {https://doi.org/10.1016/j.jda.2014.06.001}
  {\path{doi:10.1016/j.jda.2014.06.001}}.

\bibitem{Manzini01}
Giovanni Manzini.
\newblock An analysis of the burrows-wheeler transform.
\newblock {\em J. ACM}, 48(3):407--430, 2001.
\newblock \href {https://doi.org/10.1145/382780.382782}
  {\path{doi:10.1145/382780.382782}}.

\bibitem{ManziniF04}
Giovanni Manzini and Paolo Ferragina.
\newblock Engineering a lightweight suffix array construction algorithm.
\newblock {\em Algorithmica}, 40(1):33--50, 2004.
\newblock \href {https://doi.org/10.1007/s00453-004-1094-1}
  {\path{doi:10.1007/s00453-004-1094-1}}.

\bibitem{Masillo23}
Francesco Masillo.
\newblock Matching statistics speed up {BWT} construction.
\newblock In {\em {ESA}}, volume 274 of {\em LIPIcs}, pages 83:1--83:15.
  Schloss Dagstuhl - Leibniz-Zentrum f{\"{u}}r Informatik, 2023.
\newblock \href {https://doi.org/10.4230/LIPIcs.ESA.2023.83}
  {\path{doi:10.4230/LIPIcs.ESA.2023.83}}.

\bibitem{MenonBS11}
Rohith~K. Menon, Goutham~P. Bhat, and Michael~C. Schatz.
\newblock Rapid parallel genome indexing with {MapReduce}.
\newblock In {\em 2nd International Workshop on MapReduce and its Applications
  (MapReduce)}, pages 51--58, 2011.
\newblock \href {https://doi.org/10.1145/1996092.1996104}
  {\path{doi:10.1145/1996092.1996104}}.

\bibitem{CUB}
Duane Merrill.
\newblock {CUB}: Cooperative primitives for {CUDA} {C++}.
\newblock Part of the CUDA Core Compute Libraries,
  \url{https://github.com/NVIDIA/cccl}.
\newblock Version TODO.

\bibitem{divsufsort}
Yuta Mori.
\newblock libdivsufsort: A lightweight suffix-sorting library.
\newblock \url{https://github.com/y-256/libdivsufsort}, 2010.
\newblock Version 2.0.1.

\bibitem{MunroNN17}
J.~Ian Munro, Gonzalo Navarro, and Yakov Nekrich.
\newblock Space-efficient construction of compressed indexes in deterministic
  linear time.
\newblock In {\em {SODA}}, pages 408--424. SIAM, 2017.
\newblock \href {https://doi.org/10.1137/1.9781611974782.26}
  {\path{doi:10.1137/1.9781611974782.26}}.

\bibitem{NishimotoKT22}
Takaaki Nishimoto, Shunsuke Kanda, and Yasuo Tabei.
\newblock An optimal-time {RLBWT} construction in {BWT}-runs bounded space.
\newblock In {\em {ICALP}}, volume 229 of {\em LIPIcs}, pages 99:1--99:20.
  Schloss Dagstuhl - Leibniz-Zentrum f{\"{u}}r Informatik, 2022.
\newblock \href {https://doi.org/10.4230/LIPIcs.ICALP.2022.99}
  {\path{doi:10.4230/LIPIcs.ICALP.2022.99}}.

\bibitem{NishimotoT21}
Takaaki Nishimoto and Yasuo Tabei.
\newblock Optimal-time queries on {BWT}-runs compressed indexes.
\newblock In {\em {ICALP}}, volume 198 of {\em LIPIcs}, pages 101:1--101:15.
  Schloss Dagstuhl - Leibniz-Zentrum f{\"{u}}r Informatik, 2021.
\newblock \href {https://doi.org/10.4230/LIPIcs.ICALP.2021.101}
  {\path{doi:10.4230/LIPIcs.ICALP.2021.101}}.

\bibitem{Nong13}
Ge~Nong.
\newblock Practical linear-time o(1)-workspace suffix sorting for constant
  alphabets.
\newblock {\em ACM Trans. Inf. Syst.}, 31(3):15, 2013.
\newblock \href {https://doi.org/10.1145/2493175.2493180}
  {\path{doi:10.1145/2493175.2493180}}.

\bibitem{NongCHW15}
Ge~Nong, Wai~Hong Chan, Sheng~Qing Hu, and Yi~Wu.
\newblock Induced sorting suffixes in external memory.
\newblock {\em ACM Trans. Inf. Syst.}, 33(3):12:1--12:15, 2015.
\newblock \href {https://doi.org/10.1145/2699665} {\path{doi:10.1145/2699665}}.

\bibitem{NongZC11}
Ge~Nong, Sen Zhang, and Wai~Hong Chan.
\newblock Two efficient algorithms for linear time suffix array construction.
\newblock {\em IEEE Trans. Computers}, 60(10):1471--1484, 2011.
\newblock \href {https://doi.org/10.1109/TC.2010.188}
  {\path{doi:10.1109/TC.2010.188}}.

\bibitem{OhnoSTIS18}
Tatsuya Ohno, Kensuke Sakai, Yoshimasa Takabatake, Tomohiro I, and Hiroshi
  Sakamoto.
\newblock A faster implementation of online {RLBWT} and its application to
  {LZ77} parsing.
\newblock {\em J. Discrete Algorithms}, 52-53:18--28, 2018.
\newblock \href {https://doi.org/10.1016/j.jda.2018.11.002}
  {\path{doi:10.1016/j.jda.2018.11.002}}.

\bibitem{OkanoharaS09}
Daisuke Okanohara and Kunihiko Sadakane.
\newblock A linear-time burrows-wheeler transform using induced sorting.
\newblock In {\em {SPIRE}}, volume 5721 of {\em Lecture Notes in Computer
  Science}, pages 90--101. Springer, 2009.
\newblock \href {https://doi.org/10.1007/978-3-642-03784-9\_9}
  {\path{doi:10.1007/978-3-642-03784-9\_9}}.

\bibitem{Olbrich25}
Jannik Olbrich.
\newblock Fast and memory-efficient {BWT} construction of repetitive texts
  using lyndon grammars.
\newblock In {\em {ESA}}, volume 351 of {\em LIPIcs}, pages 60:1--60:19.
  Schloss Dagstuhl - Leibniz-Zentrum f{\"{u}}r Informatik, 2025.
\newblock \href {https://doi.org/10.4230/LIPIcs.ESA.2025.60}
  {\path{doi:10.4230/LIPIcs.ESA.2025.60}}.

\bibitem{OlbrichOB24}
Jannik Olbrich, Enno Ohlebusch, and Thomas B{\"{u}}chler.
\newblock Generic non-recursive suffix array construction.
\newblock {\em ACM Trans. Algorithms}, 20(2):18, 2024.
\newblock \href {https://doi.org/10.1145/3641854} {\path{doi:10.1145/3641854}}.

\bibitem{OlivaGB23}
Marco Oliva, Travis Gagie, and Christina Boucher.
\newblock Recursive prefix-free parsing for building big {BWTs.}
\newblock In {\em {DCC}}, pages 62--70. IEEE, 2023.
\newblock \href {https://doi.org/10.1109/DCC55655.2023.00014}
  {\path{doi:10.1109/DCC55655.2023.00014}}.

\bibitem{Osipov12}
Vitaly Osipov.
\newblock Parallel suffix array construction for shared memory architectures.
\newblock In {\em {SPIRE}}, volume 7608 of {\em Lecture Notes in Computer
  Science}, pages 379--384. Springer, 2012.
\newblock \href {https://doi.org/10.1007/978-3-642-34109-0\_40}
  {\path{doi:10.1007/978-3-642-34109-0\_40}}.

\bibitem{PaceT13}
Matthew~Felice Pace and Alexander Tiskin.
\newblock Parallel suffix array construction by accelerated sampling.
\newblock In {\em {PSC}}, pages 142--156. Department of Theoretical Computer
  Science, Faculty of Information Technology, Czech Technical University in
  Prague, 2013.

\bibitem{MaximilianParis2026}
Maximilian Paris.
\newblock Parallele konstruktion der burrows-wheeler-transformation auf der
  gpu.
\newblock Master's thesis, University of Münster, Münster, 2026.

\bibitem{PatelZMDO12}
Ritesh~A. Patel, Yao Zhang, Jason Mak, Andrew Davidson, and John~D. Owens.
\newblock Parallel lossless data compression on the {GPU}.
\newblock In {\em Innovative Parallel Computing (InPar)}, pages 1--9, 2012.
\newblock \href {https://doi.org/10.1109/InPar.2012.6339599}
  {\path{doi:10.1109/InPar.2012.6339599}}.

\bibitem{PingaliTB17}
Kartheek~Diwakar Pingali, K.~Chaitanya~Pavan Tanay, and Pallav~Kumar Baruah.
\newblock {GPU} accelerated suffix array construction for large genome
  sequences.
\newblock In {\em International Conference on Innovations in Information,
  Embedded and Communication Systems (ICIIECS)}, pages 1--6, 2017.
\newblock \href {https://doi.org/10.1109/ICIIECS.2017.8276035}
  {\path{doi:10.1109/ICIIECS.2017.8276035}}.

\bibitem{PiochowiakKD25}
Max Piochowiak, Florian Kurpicz, and Carsten Dachsbacher.
\newblock Random access segmentation volume compression for interactive volume
  rendering.
\newblock {\em Comput. Graph. Forum}, 44(3), 2025.
\newblock \href {https://doi.org/10.1111/cgf.70116}
  {\path{doi:10.1111/cgf.70116}}.

\bibitem{PolicritiP18}
Alberto Policriti and Nicola Prezza.
\newblock {LZ77} computation based on the run-length encoded {BWT.}
\newblock {\em Algorithmica}, 80(7):1986--2011, 2018.
\newblock \href {https://doi.org/10.1007/s00453-017-0327-z}
  {\path{doi:10.1007/s00453-017-0327-z}}.

\bibitem{PuglisiST07}
Simon~J. Puglisi, William~F. Smyth, and Andrew Turpin.
\newblock A taxonomy of suffix array construction algorithms.
\newblock {\em ACM Comput. Surv.}, 39(2):4, 2007.
\newblock \href {https://doi.org/10.1145/1242471.1242472}
  {\path{doi:10.1145/1242471.1242472}}.

\bibitem{RossiOLGB22}
Massimiliano Rossi, Marco Oliva, Ben Langmead, Travis Gagie, and Christina
  Boucher.
\newblock {MONI}: A pangenomic index for finding maximal exact matches.
\newblock {\em J. Comput. Biol.}, 29(2):169--187, 2022.
\newblock \href {https://doi.org/10.1089/cmb.2021.0290}
  {\path{doi:10.1089/cmb.2021.0290}}.

\bibitem{ShunBFGKST12}
Julian Shun, Guy~E. Blelloch, Jeremy~T. Fineman, Phillip~B. Gibbons, Aapo
  Kyrola, Harsha~Vardhan Simhadri, and Kanat Tangwongsan.
\newblock Brief announcement: the problem based benchmark suite.
\newblock In {\em {SPAA}}, pages 68--70. ACM, 2012.
\newblock \href {https://doi.org/10.1145/2312005.2312018}
  {\path{doi:10.1145/2312005.2312018}}.

\bibitem{Siren16}
Jouni Sir{\'{e}}n.
\newblock Burrows-wheeler transform for terabases.
\newblock In {\em {DCC}}, pages 211--220. IEEE, 2016.
\newblock \href {https://doi.org/10.1109/DCC.2016.17}
  {\path{doi:10.1109/DCC.2016.17}}.

\bibitem{Sun14}
Weidong Sun.
\newblock Using {GPU} to accelerate suffix array construction.
\newblock In {\em {BMEI}}, pages 677--682. IEEE, 2014.
\newblock \href {https://doi.org/10.1109/BMEI.2014.7002859}
  {\path{doi:10.1109/BMEI.2014.7002859}}.

\bibitem{SunM09}
Weidong Sun and Zongmin Ma.
\newblock Parallel lexicographic names construction with {CUDA.}
\newblock In {\em {ICPADS}}, pages 913--918. IEEE Computer Society, 2009.
\newblock \href {https://doi.org/10.1109/ICPADS.2009.31}
  {\path{doi:10.1109/ICPADS.2009.31}}.

\bibitem{Tischler17}
German Tischler.
\newblock Faster average case low memory semi-external construction of the
  burrows-wheeler transform.
\newblock {\em Math. Comput. Sci.}, 11(2):159--176, 2017.
\newblock \href {https://doi.org/10.1007/s11786-017-0296-2}
  {\path{doi:10.1007/s11786-017-0296-2}}.

\bibitem{WangBO16}
Leyuan Wang, Sean Baxter, and John~D. Owens.
\newblock Fast parallel skew and prefix-doubling suffix array construction on
  the {GPU.}
\newblock {\em Concurr. Comput. Pract. Exp.}, 28(12):3466--3484, 2016.
\newblock \href {https://doi.org/10.1002/cpe.3867}
  {\path{doi:10.1002/cpe.3867}}.

\bibitem{XieNLX20}
Jing~Yi Xie, Ge~Nong, Bin Lao, and Wentao Xu.
\newblock Scalable suffix sorting on a multicore machine.
\newblock {\em IEEE Trans. Computers}, 69(9):1364--1375, 2020.
\newblock \href {https://doi.org/10.1109/TC.2020.2972546}
  {\path{doi:10.1109/TC.2020.2972546}}.

\bibitem{ZakeriBAGL24}
Mohsen Zakeri, Nathaniel~K. Brown, Omar~Y. Ahmed, Travis Gagie, and Ben
  Langmead.
\newblock Movi: A fast and cache-efficient full-text pangenome index.
\newblock {\em iScience}, 27(12):111464, 2024.
\newblock \href {https://doi.org/10.1016/j.isci.2024.111464}
  {\path{doi:10.1016/j.isci.2024.111464}}.

\bibitem{ZakeriBGL26}
Mohsen Zakeri, Nathaniel~K. Brown, Travis Gagie, and Ben Langmead.
\newblock Movi 2: fast and space-efficient queries on pangenomes.
\newblock {\em Bioinform.}, 42(7), 2026.
\newblock \href {https://doi.org/10.1093/bioinformatics/btag362}
  {\path{doi:10.1093/bioinformatics/btag362}}.

\bibitem{ZhaoYX12}
Zhiheng Zhao, Jianping Yin, and Wei Xiong.
\newblock {GPU}-accelerated burrow-wheeler transform for genomic data.
\newblock In {\em {BMEI}}, pages 889--892. IEEE, 2012.
\newblock \href {https://doi.org/10.1109/BMEI.2012.6513199}
  {\path{doi:10.1109/BMEI.2012.6513199}}.

\end{thebibliography}

\appendix

\section{More Related Work}\label{sec:more_related_work}
This appendix covers BWT construction and suffix sorting in other models of computation.

\subparagraph*{BWT construction in other models.}
In the sequential setting, the BWT can be constructed directly from the text in compact space~\cite{HonLSSY07,HonSS09}, by blockwise suffix sorting in small space~\cite{Karkkainen07}, by induced sorting in linear time~\cite{OkanoharaS09}, in linear time and $O(n\log\sigma)$ bits~\cite{Belazzougui14,MunroNN17}, in sublinear time for small alphabets~\cite{KempaK19}, and even in place~\cite{CrochemoreGKL15}.
Lyndon-based approaches sort the suffixes of a text or a string collection via the Lyndon factorization~\cite{MantaciRRS14,BonomoMRRS14}, construct the bijective and the extended BWT in linear time~\cite{BannaiKKP24} or in place~\cite{KopplHHS20}, and construct the BWT of repetitive texts from a grammar based on the standard factorization of Lyndon words~\cite{Olbrich25}.
For repetitive texts, the run-length BWT can be maintained online in space proportional to $r$~\cite{PolicritiP18,OhnoSTIS18} and constructed in optimal time within the same space~\cite{NishimotoKT22}, in optimal time for highly repetitive texts~\cite{Kempa19}, and from the LZ77 parse~\cite{KempaK20}.
In practice, repetitive texts are handled by prefix-free parsing~\cite{BoucherGKLMM19}, which has been extended to the suffix array and the r-index~\cite{KuhnleMBGLM20} and applied recursively to the parse itself~\cite{OlivaGB23}.
Other approaches use grammar compression~\cite{Diaz-DominguezN23}, matching statistics~\cite{Masillo23}, and the incremental construction of ropebwt3~\cite{Li24}, which indexes hundreds of human genomes.
In external and semi-external memory, the BWT is constructed with sequential scans only~\cite{FerraginaGM12}, with little main memory~\cite{BellerZGO13,Tischler17}, by BCR~\cite{BauerCR13,CoxGRS16} and other algorithms for string collections~\cite{BonizzoniVPPR21,AdlerBHS25}, and by merging the BWTs of subcollections~\cite{HoltM14,Siren16,EgidiLMT19,EgidiM20}.
Shared-memory parallel constructions exist for single genomes~\cite{HayashiT13,LiuHS16}, for collections of genomes~\cite{LiuZW16}, and in compact space~\cite{Fuentes-SepulvedaNN19}. PRAM algorithms construct the BWT with polylogarithmic depth~\cite{EdwardsV14,Fuentes-SepulvedaNN20}. In distributed memory, the BWT has been computed with MapReduce~\cite{MenonBS11} and Apache Spark~\cite{GalluzzoGRR26}.
All PFP-based constructions we are aware of run on CPUs and rely on hash tables and inverted lists.

\subparagraph*{Suffix sorting in other models.}
Puglisi et al.~\cite{PuglisiST07} survey sequential SACAs. SACABench~\cite{SACABench19} compares many of them.
The classical approaches are prefix doubling~\cite{ManberM93,LarssonS07}, the recursive linear-time algorithms~\cite{KarkkainenSB06,KimSPP05,KoA05}, and induced sorting~\cite{NongZC11,Nong13}. More recent algorithms are non-recursive and based on Lyndon words~\cite{Baier16,BertramEF21,OlbrichOB24} or work in place in linear time~\cite{LiLH22}.
Practical implementations include deep-shallow sorting~\cite{ManziniF04}, \code{divsufsort}~\cite{divsufsort,FischerK17}, and \libsais~\cite{libsais}, which is based on induced sorting and among the fastest.
In external memory, suffix arrays are constructed by doubling~\cite{CrauserF02,DementievKMS08}, by the skew algorithm~\cite{DementievKMS08}, by induced sorting~\cite{BingmannFO16,NongCHW15,KarkkainenKPZ17}, and in parallel~\cite{KarkkainenKP15}.
Shared-memory parallel SACAs parallelize the skew algorithm~\cite{ShunBFGKST12}, prefix doubling and induced copying~\cite{LabeitSB17}, induced sorting~\cite{LaoNCX18,XieNLX20}, and samplesort~\cite{KhanRMDP24}.
In distributed memory, SACAs parallelize doubling~\cite{KitajimaN99,FlickA15} and the skew algorithm~\cite{KullaS07,BingmannGK18}. Recent lightweight constructions scale to many compute nodes~\cite{FischerK19,HaagKSS25}.
The skew algorithm also has PRAM and BSP variants~\cite{KarkkainenSB06}. Pace and Tiskin~\cite{PaceT13} give a deterministic BSP algorithm with few synchronization steps.

\section{Additional Experimets}
\subparagraph*{The memory ladder.}
\Cref{fig:phases} shows the time per phase (the eight phases of \cref{fig:gpu-flow} plus setup and output, which include the transfers between host and GPU) of all configurations at the largest input size that all of them handle, and \cref{fig:spacetime-all} shows the same runs as space-time points.
On the non-repetitive inputs, each level lowers the peak. On \texttt{english} at $381.5$\,MiB, the peak drops from $47.7$ bytes per character at level~0 to $41.0$ at level~2, $32.0$ at level~3, $20.9$ at level~4, and $13.4$ and $11.9$ at levels~5 and~6.
The levels cost time: level~4 is the slowest on every non-repetitive input (on \texttt{pitches}, $19.1$\,s instead of $3.2$\,s at level~0), and levels~5 and~6 recover only part of the loss ($9.1$\,s).
The MSD core removes this cost.
On \texttt{english}, \code{cub6} has the same peak as \code{opt6} ($11.9$ bytes per character) and needs $1.6$ instead of $2.6$\,s.
On \texttt{pitches}, it needs $1.2$ instead of $9.1$\,s at $17.0$ instead of $14.2$ bytes per character.
On six of the nine repetitive inputs, the peak does not change across the levels, because $s$ is small (at most $0.10n$) and the peak lies in the dictionary phases (parsing, the two dictionary sorts, and the parse ranks).
There, \code{cub6} needs more memory than \code{opt6} because of the scratch of its unchunked dictionary sorts.
\code{space} removes this scratch and lies on the Pareto front on $13$ of the $15$ inputs. The levels below~5 are dominated on $14$ of them.
On the whole \texttt{english}, \code{space} lowers the peaks of the LR dictionary sort ($6.7 \to 4.6$ bytes per character) and of the suffix generation ($7.8 \to 5.8$). The suffix sort then determines the overall peak, which drops only from $10.0$ to $9.9$.
Where a dictionary sort determines the peak, \code{space} lowers it from $15.3$ to $11.1$ bytes per character on the whole \texttt{proteins} and by $24$ to $27\%$ on eight of the nine repetitive inputs.

\begin{figure}[tbp]
\centering
\includegraphics[width=0.81\textwidth]{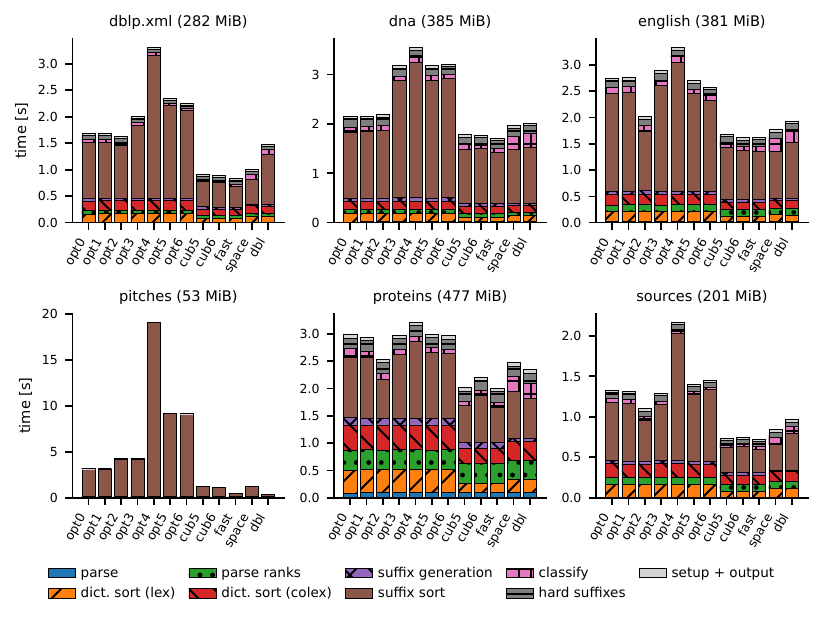}\\[1ex]
\includegraphics[width=0.81\textwidth]{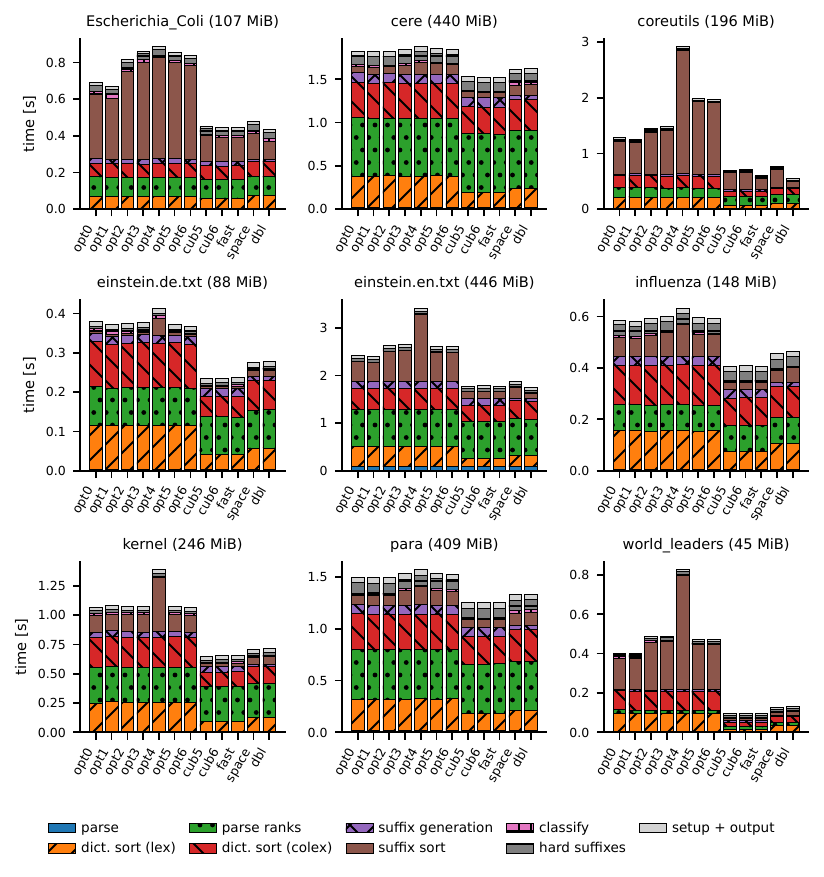}
\caption{Time per phase at the largest input size that all configurations handle, for the non-repetitive (top) and the repetitive texts (bottom).}\label{fig:phases}
\end{figure}

\section{Comparison with Big-BWT}\label{app:bigbwt}

\Cref{tab:mapping} maps each step of Big-BWT (\cref{sec:pfp-impl}) to the corresponding step of our algorithm (\cref{sec:algo}).

\begin{table}[tbp]
\caption{The steps of the original PFP-BWT (Big-BWT, \cref{sec:pfp-impl}) and
of our GPU algorithm.}\label{tab:mapping}
\centering\footnotesize
\setlength{\tabcolsep}{4pt}
\begin{tabular}{@{}>{\raggedright\arraybackslash}p{2.3cm}>{\raggedright\arraybackslash}p{5.2cm}>{\raggedright\arraybackslash}p{5.4cm}@{}}
\toprule
step & original (Big-BWT) & ours \\
\midrule
text and \newline conventions
  & $\#\,T\,\$^w$ (in practice NUL padding), BWT of $T\$$
  & $T\,\$^w$, the first window $T[0..w-1]$ acts as a trigger string instead of
    $\#$, BWT in the convention of \libcubwt \\
\addlinespace
triggers
  & rolling Karp--Rabin hash, sequential scan
  & the same fingerprint, evaluated directly per window, no dependency between windows \\
\addlinespace
parsing
  & one pass emitting phrases
  & count triggers per block, scan, write phrase ends into $\Docc$ \\
\addlinespace
dictionary
  & hash table of phrase hashes with frequencies, sorted dictionary on disk
  & never materialized, groups of equal phrases in the LR order of $\Docc$ \\
\addlinespace
parse
  & phrase ranks on disk
  & group ranks of the LR order, written back in text order \\
\addlinespace
order of the parse
  & $\SA(\Parse)$ by SACA-K, inverted lists from $\SA(\Parse)$, $W$ permuted
  & $\SA(\Parse)$ by prefix doubling, and per occurrence the rank of the
    following parse suffix \\
\addlinespace
phrase suffixes
  & SA and LCP of the concatenated dictionary by gSACA-K
  & generated once each by adjacency in the RL order, then sorted as strings
    (\cref{sec:memory,sec:refinements}) \\
\addlinespace
occurrence sets
  & phrases sharing a suffix found by the LCP scan, their occurrences found via
    the inverted lists
  & a contiguous interval of the RL order, found by a next-smaller-value query
    on the longest common suffixes \\
\addlinespace
easy/hard test
  & compare the preceding characters of all phrases sharing the suffix
  & compare the preceding characters of the two interval endpoints \\
\addlinespace
emission
  & easy suffixes as a run of the summed phrase frequencies, hard suffixes by a heap merge
    of inverted lists
  & easy suffixes as a run of the interval size, hard suffixes by sorting the
    interval by parse rank \\
\addlinespace
whole phrases
  & preceding character from $W$
  & preceding character read from $T$ \\
\addlinespace
execution
  & sequential, external memory, space about $|\Parse|$ plus the total length of $\Dset$
  & GPU-resident, data-parallel, space governed by $d$ and $s$ \\
\bottomrule
\end{tabular}
\end{table}

\end{document}